\begin{document}

\title{A new estimate of mutual information based measure of dependence between two variables: properties and fast implementation
}
\subtitle{}

\titlerunning{Mutual Information based Dependence Index}        

\author{Namita Jain         \and
        C.A. Murthy 
}

\institute{Namita Jain \at
              Machine Intelligence Unit, Indian Statistical Institute, 203 Barrackpore Trunk Road,Kolkata 700108,India \\
              Tel.: +919874446800\\
              Fax: +91332573357\\
              \email{namita.saket@gmail.com}           
           \and
           C.A. Murthy \at
              Machine Intelligence Unit, Indian Statistical Institute, 203 Barrackpore Trunk Road,Kolkata 700108,India
              \email{murthy@isical.ac.in}
}



\maketitle

\begin{abstract}
This article proposes a new method to estimate an existing mutual information based dependence measure using histogram density estimates. Finding a suitable bin length for histogram is an open problem. We propose a new way of computing the bin length for histogram using a function of maximum separation between points. The chosen bin length leads to consistent density estimates for histogram method. The values of density thus obtained are used to calculate an estimate of an existing dependence measure. The proposed estimate is named as MIDI (Mutual Information based Dependence Index). Some important properties of MIDI have also been stated.

The performance of the proposed method has been compared to generally accepted measures like distance correlation (dcor), Maximal Information Coefficient (MINE) in terms of accuracy and computational complexity with the help of several artificial data sets with different amounts of noise. The proposed method is able to detect many types of relationships between variables, without making any assumption about the functional form of the relationship. The power statistics of proposed method illustrate their effectiveness in detecting non linear relationship. Thus, it is able to achieve generality without a high rate of false positive cases. MIDI is found to work better on a real life data set than competing methods. The proposed method is found to overcome some of the limitations which occur with dcor and MINE. Computationally, MIDI is found to be better than dcor and MINE, in terms of time and memory, making it suitable for large data sets.
\keywords{estimates \and histogram \and bin length \and dependence measure \and connectivity distance \and non-linear relationship}
\end{abstract}.

\section{Introduction}

While modelling complex systems, it is often found that non-linear dependence occurs between variables. Sometimes these relationships are functional. At other times, the relationship cannot be modelled well using functional forms. For example, in biological systems and weather forecasting the behaviour of variables may seem random, but repetition of a fixed random pattern may reveal a relationship between the variables as discussed by \cite{dalgleish}. Even if the relationship can be modelled using simple functional form, it is infeasible to test against each functional form individually as there exist infinitely many functional forms.

\subsection{Some methods commonly used to measure dependence between variables}

There are several measures of dependence between two variables in literature. The most widely used dependence measure is Correlation coefficient ($\rho$). It has several desirable properties. However, it accounts only for linear relationship between the variables. Having $\rho=0$ does not imply the independence of variables. It is necessary to have a measure which depicts non-linear relationship between two variables. Additionally, the measure should not assume any theoretical probability distribution.

The literature on dependence measures went along two different lines. Some authors concentrated on the properties that a measure should possess. \cite{Urbach:2000} and \cite{Dionisio} support a strong relationship between entropy, dependence and predictability. This relation has been studied by several authors, namely \cite{darbelly}, \cite{CGMR}, \cite{physicaA:dionisio}, \cite{reshef}, \cite{Kraskov:2008}, and \cite{Yao}. Dependence measure MINE defined by \cite{reshef} is found to be  capable of calculating dependence between variables related to each other in different ways. However, \cite{Tibshirani} state that MINE overestimates the dependence between variables.

Other authors provided different dependence measures. \cite{SRpaper} have defined the distance correlation as a dependence measure which has several desirable properties but it does not detect relationships like sinusoidal wave, circle etc. where one way relationship exists. \cite{Kvalseth} has briefly defined an information-theoretic measure of dependence between two variables in his paper. This measure has several desirable properties. However, he has not discussed how this measure can be estimated for a given set of points.
More discussion on existing mutual information based measures and their estimation can be found in section \ref{MIsec}.

In this paper we propose a method for estimating the measure $KM1$ described by Kv{\aa}lseth. For estimation of $KM1$ we need to estimate mutual information and entropy. This estimation is done using density estimates obtained from histogram. This article addresses the important issue of finding the bin size which results in consistent density estimates. The proposed method of finding bin length is efficient as it depends on simple statistical properties of data. The estimate of $KM1$ thus obtained is called Mutual Information based Dependence Index(MIDI). This index is capable of finding dependence between variables related to each other in different ways without overestimating the dependence.

The rest of the paper is organized as follows. Section \ref{MIsec} discusses some methods for estimating mutual information based measures. In this section we also look at some mutual information based dependence measures existing in literature. Section \ref{proposedSec} introduces proposed method of finding the estimate called MIDI. Section \ref{chosenMeas} explains why a particular measure has been chosen for estimation. Section \ref{BinReq} discusses the importance of choosing bin length in histogram method. Section \ref{binlength} consists of the method used to determine the bin length. Section \ref{MIChist} provides the method of calculating MIDI from the histogram, and section \ref{proposedL1} discusses the $L_1$ consistency of proposed method of density estimation. Section \ref{Algo} provides the algorithm to calculate MIDI. Section \ref{results} contains the calculated values of MIDI and other algorithms for artificial datasets with and without noise. It also gives comparison of proposed and other methods in terms of power and experiments on a real life data set. The article concludes with section \ref{conc}. An appendix has been included which contains theorems stated by other authors. All the tables and figures referred to in the text have been provided at the end of the paper.

\section{Literature survey and Mutual information based dependence measures}
\label{MIsec}

\subsection{Existing methods for estimating Mutual Information and Entropy}
\label{mutualmeasures}

Since the underlying probability distributions are unknown, the Mutual information based measure has to be estimated. This can be done by using a non parametric approach, or by a parametric method. Parametric methods need specific form of stochastic processes as stated by \cite{physicaA:dionisio} ,\cite{granger:2000}. \cite{poczos} have given a non-parametric method for estimating mutual information based on lengths of edges of nearest-neighbour graph. However, this method sometimes gives very small negative values. This is not desirable since estimated value of dependence measure should be non-negative.

Histogram based density estimates can be used for estimating mutual information and entropy. \cite{Jenssen} have shown that Parzen window-based estimators for the quadratic information measures have a dual interpretation as Mercer kernel-based measures, where they are expressed as functions of mean values in the Mercer kernel feature space. They have shown Mercer kernel and the Parzen window to be equivalent.

\cite{Kong14} have shown that objective function of Linear discriminant analysis is sum of KL-divergence between two corresponding classes for each pair of points. They also proposed Pairwise-Covariance Linear Discriminant Analysis where the objective function is modified to emphasize the contribution of pair of classes which are separated by a smaller distance. \cite{Sahami96learninglimited} has used mutual information and class conditional mutual information for designing a $k-dependence$ Bayesian classifier.

\cite{reshef} proposed a method to calculate dependence using mutual information, wherein histogram was used to estimate density. Equi-probable partitions are created along one axis and the other axis is partitioned so that the overall value of the measure is maximized. Number of partitions is limited by $n^{0.6}$, where $n$ is the number of points in data set. This process is repeated by swapping the axes and maximum of either is taken. Thus, Reshef et al. found bin length for calculating mutual information using an elaborate process. Note that this process of finding the bin length is computationally expensive. As the partition is chosen with a preference for higher values of estimate the value attained for independent variables is relatively high, and hence it generally overestimates the dependency value as stated by \cite {Tibshirani}.

In this article, we propose a fast and implementable method for finding non-linear relationship between variables quickly using an estimate of mutual information based measure. This estimate is also calculated using histogram based density estimation. However, we have used a partitioning scheme which leads to better performance in case of noisy data. The partitioning is done using statistical properties of data which can be calculated easily. No optimization is required over different partitions of same data. This makes the proposed method to be efficient in terms of execution time and memory. Consequently, it can be used for large volumes of data where applying existing methods becomes infeasible due to time and memory constraints.

\subsection{Entropy and mutual information: Definitions}
For two discrete random variables $X$ and $Y$, let the probability mass functions be denoted by $P_1(x)$ and $P_2(y)$, and their joint probability mass function be denoted by $P(x,y)$. For two continuous random variables $X$ and $Y$, let the probability density functions be denoted by $p_1(x)$ and $p_2(y)$, and their joint probability density function be denoted by $p(x,y)$.

For a discrete random variable $X$, the term $H(X)$ is called entropy and it is defined as $\sum_x P_1(x)log (\frac {1}{P_1(x)})$. For two discrete random variables $X$ and $Y$, the term $H(X|Y)$ is called conditional entropy, and it is defined as $\sum_y \sum_x P(x,y)log(\frac {P_2(y)}{P(x,y)})$. The term $H(X,Y)$ is called joint entropy and is defined as $\sum_y \sum_x P(x,y)log(\frac {1}{P(x,y)})$. Similar definitions for $H(X)$, $H(X|Y)$, and $H(X,Y)$ can be given when $X$ and $Y$ are continuous random variables.

Mutual information I(X,Y) between variables X and Y measures the decrease of uncertainty about X caused by the knowledge of Y, which is the same as the decrease of uncertainty about Y caused by the knowledge of X. It measures the amount of information about X contained in Y, or vice versa . The definition is taken from a book by \cite{Ash}.
\begin{align}
I(X,Y)  &= H(X) - H(X|Y ) \\
        &= H(Y ) - H(Y |X) \\
        &= H(X) + H(Y ) - H(X, Y )
\end{align}
 It may be noted that mutual information of a discrete random variable with itself is its entropy. This is also referred to as self information. Entropy is sometimes also described as amount of uncertainty in the system. \cite{JaynesMaxEnt} has stated that in statistical modeling one should choose a system with maximum entropy possible under given constraints. \cite{XiZhaoMaxEnt} have proposed a scheme for refinement of fuzzy if-then rules based on the maximization of fuzzy entropy on the training set. On similar lines, \cite{XiZhaoMaxAmbi} have proposed maximum ambiguity based sample selection in fuzzy decision tree induction. \cite{XiZhaoGenFuzzy} have studied the relationships between generalization capabilities and fuzziness of fuzzy classifiers. They have experimentally confirmed the relationship which meets the conditional maximum entropy principle.

\subsection{Existing Mutual Information based measures}
\label{exisMImeas}

\cite{Tambakis} presents a mutual information estimator, which is based on equidistant cells. The author suggests  the determination of the Self-Information Measure (SIM), through the univariate non-parametric predictability computation, as a function of the mutual information and the number of partitions (K), that is:
\begin{align}
SIM_{K}=\frac{I(X,Y)}{log(K)}
\end{align}
\cite{reshef} propose another way of estimating dependence measure as:
\begin{align}
MINE_{k,m}=\frac{I(X,Y)}{\log(\min(k,m))},
\end{align}
where $k$ and $m$ are numbers of partitions along each direction.

\cite{Kvalseth} described three measures in his paper, and they are given below.
\begin{align}
\label{eqKM1first}
KM1=\frac{I(X,Y)}{\min(H(X),H(Y))}\\
KM2=\frac{I(X,Y)}{\max(H(X),H(Y))}\\
KM3=\frac{2*I(X,Y)}{(H(X)+H(Y))}
\end{align}
$KM2$ and $KM3$ were proposed by \cite{Horibe} and \cite{Norusis} respectively.  $KM2$ has been used by \cite{Kraskov:2008} for clustering. \cite{Kvalseth} mentions that the value of all three measures is $1$ in case of strict one to one association between the variables. However, for $KM1$ this is a sufficient condition but not a necessary condition. In this article, we shall be using $KM1$.

\subsection{Other well known dependence measures}

Among the well known dependence measures, Pearson correlation coefficient is a very widely used measure for computing linear relationship.
It is invariant to scaling and translation. But it does not detect non linear relationships.

Spearman correlation between two variables $X$ and $Y$ is Pearson correlation between rank of points of $X$ and $Y$. Therefore it detects relationship where the variables are monotonically related to each other, even if the relation is non linear. It will not be able to detect non monotonic relations like sinusoidal wave, saw-tooth wave etc.

Dependence measure called distance correlation (dcor) defined by \cite{SRpaper} is correlation between linear functions of interpoint distances. The authors have defined distance covariance as $\nu _n^2(X,Y)=1/n^2 \Sigma_{k,l=1}^n A_{k,l} B_{k,l}$, where $A_{k,l}$ and $B_{k,l}$ are linear functions of pairwise distances between points of $X$ and $Y$. Similarly, distance variance is defined as $\nu_n^2(X)=1/n^2 \Sigma_{k,l=1}^n A_{k,l}^2$, where $A_{k,l}$ is same as mentioned earlier. Distance correlation($\mathcal{R}$) is normalized distance covariance defined by $\mathcal{R}^2(X,Y)=\frac{\nu_n^2(X,Y)}{\sqrt{\nu^2(X) \nu^2(y)}}$ if $\nu^2(X) \nu^2(y)>0$, and 0 otherwise. It has been shown that distance covariance can also be defined as $\nu_n^2=||f_{X,Y}^n(t,s)-f_X^n(t) f_Y^n(s)||^2$, where $f_{X,Y}^n(t,s)$ is the joint empirical characteristic function of the sample ${(X,Y), \cdots(X_n,Y_n)}$ and $f_X^n(t)$, $f_Y^n(s)$ are marginal characteristic functions of $X$ and $Y$. Distance correlation detects relationships where both the variables are determined by each other (i.e., one one and onto function exists between the values of two variables.), and therefore it does not detect relationships like sinusoidal wave, circle etc.

At some places calculating the dependence between variables is not explicitly required but understanding the dependence remains an important part of the problem. For example, identifying a low rank space which is common to multiple classifiers allows us to couple multiple learning tasks using the basis of low rank subspace. Such an algorithm based on linear multi task learning has been designed by \cite{Chen2012}. They used regularization for enforcing sparsity and applied low rank constraint to the objective function to encourage a low rank structure.

\section{Proposed estimates of dependence measure : properties and implementation}
\label{proposedSec}

\subsection{Dependence measure estimated in this paper: Properties and their implications for proposed estimates}
\label{chosenMeas}

It can be shown that $KM1$ attains a value of $1$ if a strict one way association exists, i.e. either of the variables is a function of other. This allows us to detect functional forms having one way dependence e.g. $y=x^2$, $y=\sin(x)$. Very little work has been done using $KM1$, though it was proposed long ago. We feel that $KM1$ as a measure of association between variables has many desirable properties. Though \cite{Kvalseth} describes $KM1$ briefly but no method has been proposed to estimate the defined measure. In the proposed method the values of $I(X,Y), H(X)$ and $H(Y)$ are estimated using a new density estimation method described in sections \ref{binlength} and \ref{MIChist}. The resulting estimates are then used to calculate an estimate of $KM1$ which is named as Mutual Information based Dependence Index ($MIDI$).

\cite{pompe:1998}  presents some important properties of mutual information in the discrete case, namely:
\begin{align}
&I (X,Y) = 0      &&\textit{,iff X, Y are statistically independent}\\
&I (X,Y) = H (X)  &&\textit{,iff X is a function of Y}\\
&I (X,Y) = H (Y ) &&\textit{,iff Y is a function of X}
\end{align}
The measure estimated in this paper is
\begin{align}
\label{eqKM1second}
KM1=\frac{I(X,Y)}{min(H(X),H(Y))}
\end{align}

It can be seen that KM1 attains a value of $1$ only when X is a function of Y, or Y is a function of X or both. As, $I(X,Y)=0$ if and only if X and Y are statistically independent, KM1 attains a value $0$ in case of statistical independence. The value of $KM1$ always lies between $0$ and $1$. As the proposed density estimates are shown to be $L_1$ consistent in following sections it can be said that the value of proposed estimate MIDI goes to $1$ when a perfect one way relationship exists between two variables and $0$ when two variables are completely independent, as number of points increases.

In this paper the entropy and mutual information of variables are estimated by dividing each axis into bins. The proportion of number of points lying in the bin to total number of points is used as the value of Probability mass function for points lying in the bin. Since, mutual information and entropy for discrete random variables are invariant under change of variable the value of calculated estimate remains unchanged as long as structure of bins remains the same. As the ratio of maximum separation between the points along a given axis and range along this axis are independent of scale, the division of each axis is invariant to scaling. The proposed estimate is therefore invariant to scaling. Similarly, the bin structure and the proposed estimate are also invariant to translation. Under above mentioned conditions the value of estimate remains unchanged. The estimate however is not invariant to rotation or any transform that is not order preserving.

\subsection{Determining bin length and calculating dependence}
\label{BinReq}
If histogram approach for density estimation is used, the length of bin becomes a very important parameter affecting the value of estimate as stated by \cite{Dionisio}. \cite{DarbellayVajda} mention the importance of selecting correct partitions for estimation of mutual information using histogram method and recommend the use of data dependent procedures for determining appropriate partitions.

If the bin length $\epsilon$ is too large the index fails to capture non linear relationships as it will not find unique mapping between pair of axes. So it is required that $ \lim_{n \to \infty} \epsilon \to 0$. On the other hand if bin length is too small the distribution will not seem to be continuous. As the number of points increases, the bin length should decrease. However, if the length of bin decreases too quickly with increasing $n$, many partitions will remain empty. \cite{parzen} ,\cite{Lugosi} have given criteria for selection of appropriate bin size.

In this article a new method is proposed for finding bin length. It is computationally simple, and the value of estimated measure is closer to zero than the values obtained by proposed by MINE, for the case of independent variables. The resulting estimate is also less susceptible to noise.

In the following paragraphs the proposed method for calculating dependence is given. Entropy and mutual information of the variables are calculated by dividing each axis into bins. The length of the bin on one axis is determined using a function of connectivity distance along that axis. On the other axis the number of bins is proportional to logarithm of number of data points. The proportion of number of points lying in the bin to the total number of points, is used as the value of Probability mass function for points lying in the bin. These estimates are then used to calculate the mutual information, entropy along first dimension and entropy along second dimension. The calculated values are used to obtain $MIDI$. The procedure is repeated taking the bin length to be a function of connectivity distance on second axis and a function of logarithm of number of data points on first. Higher of two values is taken to be the calculated dependence index. Theoretical properties of the method applied for density estimation have also been discussed.

\subsubsection{Proposed method of bin length estimation}

\label{binlength}
\cite{alpha:hull} have used the average edge length of Minimal spanning tree as the value of $\alpha$, the radius of the area for determining points lying nearby for set estimation. Here we propose to use the connectivity distance as defined by \cite{appelRusso}, and \cite{penrose:strongLaw} to determine the bin length along one dimension. We use a function of connectivity distance along a single dimension. This connectivity distance is also called maximal spacing by \cite{Slud}.

Let the given data be $\{(x_1,y_1), \cdots, (x_n,y_n)\} \subset {\mathbf{R}}^2$. We assume that the points are drawn
i.i.d. from set $A$ following a continuous probability density function $f$, which is unknown. Let
$A$ be path connected, compact, $Closure(Interior(A))=A$ and boundary of A, denoted by $\delta(A)$ is such that $\lambda(\delta A)=0$, $\lambda$ is Lebesgue measure. Let the support of $f$ be $A$, i.e. $f(x)=0$ $ \forall x \in A'$, where $A'$ is complement of $A$ and $\int_v f(x) dx > 0$  $\forall v$, $v$ open and $v\cap A \not=\phi $.

The data is sorted along x axis. Let the sorted values be $x_{(1)}\cdots x_{(n)}$. The maximum difference between consecutive ordered points be $L_{max}=\sup\{x_{(i+1)}-x_{i}:1\leq i\leq n-1\}$, where $n$ is number of points in sample. The proposed bin length along x axis is $n^cL_{max}$. Here $c$ is a constant such that $0<c<1$. For a fixed n, as the value of c goes towards $0$ we get smaller bin lengths. Since the number of bins increases we get a more detailed picture so we are more likely to get unique mappings leading to higher value for dependence measure estimate. However, the decrease in bin length can also increase value of index for uniform distribution although at a smaller rate. Along the y axis  number of divisions is given by $d_n=log_{10}(n)$. Once the bin length is found, estimate of dependence measure can be calculated as explained in section \ref{MIChist}. The same procedure needs to be repeated after swapping the x and y axes and the maximum of the two estimates is taken as the final value.

\subsubsection{Calculating estimated mutual information based dependence measure using proposed histogram scheme: Calculating MIDI}
\label{MIChist}

Let the range of data along x and y axes be scaled to $[0,1]$. The origin of the interval is taken as starting point for histogram calculation. The mutual information according to densities estimated using the above mentioned scheme is given by
\begin{align}
\hat{I}=\sum_{i \in \pi_x}\sum_{j \in \pi_y} \frac{\#_{i,j}}{n}\log \left [\frac{n\#_{i,j}}{\#_i \#_j}\right ]
\end{align}
where $\pi_x$ and $\pi_y$ are partitions along x and y directions respectively. The number of points in the cell given by intersection of interval $i$ along x axis and $j$ along $y$ axis is given by $\#_{i,j}$. Number of points in interval $i$ along x axis is given by $\#_{i}$. Similarly, number of points in interval $j$ along y axis is given by $\#_{j}$. Entropy values along x and y axes are calculated in similar way as $\hat{H}_x=\sum_{i\in \pi_x}\frac{\#_i}{n}\log(\frac{n}{\#_i})$ and $\hat{H}_y=\sum_{j \in \pi_y}\frac{\#_j}{n}\log(\frac{n}{\#_j})$. $MIDI$ is now calculated as $\hat{I}/min(\hat{H}_x,\hat{H}_y)$. This $MIDI$ is the estimated form of $KM1$ given in equations (\ref{eqKM1first}) and (\ref{eqKM1second}).

\subsection{$L_1$ consistency of density estimates obtained using proposed bin length}
\label{proposedL1}

For consistency of density estimates using histogram method, the length of bins is very important parameter. \cite{Lugosi} gave general sufficient conditions for the strong $L_1-consistency$ of histogram density estimates based on data-dependent partitions. The relevant theorem from the article has been reproduced in appendix as Theorem \ref{L1cons}. In each case, the desired consistency requires shrinking cells, sub-exponential growth of a combinatorial complexity measure, and sub-linear growth of the number of cells. Under the assumptions of the theorem the $L_1-error$ of the normalized partitioning density estimates converges to zero with probability one.

It has been seen  from Lemmas \ref{SubLinGrow} and \ref{lemSubExpo} given below, that the number of cells is sub linear and the combinatorial complexity is sub exponential as $n \to \infty$ for the proposed method.

\begin{lemma}
\label{SubLinGrow}
In the proposed histogram scheme, for a fixed c, $ n^{-1}m(A_n) \to 0 \text{ ,as } n \to \infty$
\end{lemma}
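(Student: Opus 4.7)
The plan is to bound the number of cells $m(A_n)$ by the product of the numbers of bins along the two axes and then control the $x$-axis bin count via the asymptotic behaviour of the maximal spacing $L_{max}$. After rescaling, both marginals lie in $[0,1]$; the $x$-axis is partitioned into bins of length $n^c L_{max}$ and the $y$-axis into $d_n=\lceil\log_{10}n\rceil$ bins, so
\begin{align*}
m(A_n)\;\leq\;\left\lceil\frac{1}{n^c L_{max}}\right\rceil\,\lceil\log_{10}n\rceil .
\end{align*}

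The key estimate then comes from a one-sided strong law for $L_{max}$. For i.i.d.\ samples drawn from a continuous density with the stated support properties, classical results on maximal spacings (Slud; Appel--Russo; Penrose, all cited earlier in the text) give an almost-sure lower bound of the form $L_{max}\geq C\log n/n$ eventually, for some $C>0$ depending on $f$. Substituting into the previous display,
\begin{align*}
\frac{m(A_n)}{n}\;\leq\;\frac{\lceil\log_{10}n\rceil}{n}\left(\frac{1}{n^c L_{max}}+1\right)\;\leq\;\frac{\lceil\log_{10}n\rceil}{n}\left(\frac{n^{1-c}}{C\log n}+1\right)\;=\;O(n^{-c}),
\end{align*}
which tends to $0$ because $c\in(0,1)$.

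The main obstacle is verifying that the $x$-marginal of the joint density $f$ satisfies the hypotheses of the cited strong law for the maximal spacing. Since $A\subset\mathbf{R}^2$ is only assumed path-connected, compact, and equal to $\mathrm{Closure}(\mathrm{Interior}(A))$, one must check that the projection of $A$ onto the $x$-axis is a compact set on which the induced marginal density is regular enough --- typically, bounded away from zero on the interior of its support --- to yield the $C\log n/n$ lower bound almost surely. Once this asymptotic order of $L_{max}$ is in hand, the remainder of the argument is elementary arithmetic, and the upper bound $O(n^{-c})$ suffices to establish the sub-linear growth claim.
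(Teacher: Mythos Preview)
Your argument is correct, but it takes a longer and more delicate route than the paper. The paper's proof rests on the \emph{deterministic} pigeonhole observation that, for $n$ points in an interval of length $g$, the maximal spacing satisfies $L_{max}\ge g/(n-1)$. Hence the number of $x$-bins is at most $g/(n^c L_{max})\le (n-1)/n^c$, and multiplying by the $\log_{10}n$ bins on the $y$-axis gives $m(A_n)\le (n-1)\log_{10}(n)/n^c$, from which $n^{-1}m(A_n)\le \log_{10}(e)\,\log(n)/n^c\to 0$. No probability, no regularity assumptions, no limit theorems for spacings are needed.

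By contrast, you invoke an almost-sure lower bound $L_{max}\ge C\log n/n$ coming from strong laws for maximal spacings. That bound is sharper numerically, but for this lemma it is overkill: it forces you to check that the $x$-marginal density meets the hypotheses of those theorems (boundedness away from zero on a compact support, etc.), which you correctly flag as an obstacle and which the stated assumptions on $A$ and $f$ do not obviously guarantee. The paper sidesteps the issue entirely because the crude bound $L_{max}\ge g/(n-1)$ holds for \emph{any} configuration of $n$ points. In short, your approach works once the marginal regularity is granted, but the paper's elementary argument is both simpler and free of that extra hypothesis.
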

\begin{proof}
In the proposed scheme the smallest possible value of $L_{max}$ will be attained when all points are equidistant to each other along x axis. In this case the value for $L_{max}$ will be $g/(n-1)$, where $g$ is the range along x axis. In this case the number of divisions along x axis will be $\frac{g}{n^c.L_{max}}=\frac{(n-1)}{n^c}$. Number of divisions along y axis is given as $d_n=log_{10}(n)$. So the maximum number of cells in given scheme is $\frac{(n-1).d_n}{n^c}=\frac{(n-1)log_{10}(n)}{n^c}$ where $0<c<1$.
\begin{align}
m(A_n)      &\leq\frac{(n-1)log_{10}(n)}{n^c}\\
n^{-1}m(A_n)&\leq\frac{(n-1)log_{10}(n)}{n^{(c+1)}}\\
            &<\frac{n.log_{10}(n)}{n^{(c+1)}}\\
            &=\frac{log_{10}(e).log(n)}{n^c}
\end{align}
Note that $\frac{log(n)}{n^c} \to 0$ for a fixed $c$ for $n \to \infty$
\begin{align}
\implies  n^{-1}m(A_n)&=0 \textit{ ,as } n\to\infty
\end{align}
\qed
\end{proof}

\begin{lemma}
\label{lemSubExpo}
For the proposed histogram scheme $n^{-1}\log(\Delta_n^\ast (\mathcal{A}_n))\to 0$ as $n\to\infty$
\end{lemma}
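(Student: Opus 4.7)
The plan is to exploit the observation that $\mathcal{A}_n$ is, combinatorially speaking, a one-parameter family of grid partitions. The $y$-axis is always cut into $d_n = \log_{10}(n)$ equal subintervals of $[0,1]$ independently of the data, so the $y$-partition is fixed once $n$ is fixed and contributes no data-dependence. The $x$-axis partition consists of cells $[0,w),[w,2w),\ldots$ of constant width $w = n^c L_{\max}$; as the underlying sample varies over all $n$-point subsets of $[0,1]^2$, the only parameter that changes is $w$. Hence bounding $\Delta_n^\ast(\mathcal{A}_n)$ reduces to counting the distinct partitions induced on a fixed $n$-point set by a one-parameter sweep in $w$.

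For an arbitrary but fixed configuration $\{(x_i,y_i)\}_{i=1}^n$, I would track the $x$-cell index of each point as $w$ varies. The index $\lfloor x_i/w \rfloor$ is piecewise constant in $w$ and jumps only at the critical values $w \in \{x_i, x_i/2, x_i/3, \ldots\}$. In the admissible range the cell index cannot exceed the maximum number of cells on the $x$-axis, which is at most $n^{1-c}$: by the argument in Lemma \ref{SubLinGrow}, $L_{\max} \ge 1/(n-1)$ in the equidistant worst case, forcing $w \ge n^c/(n-1) \ge n^{c-1}$. Thus each point contributes at most $n^{1-c}$ breakpoints, and summing over the $n$ points shows that at most
\begin{equation*}
\Delta_n^\ast(\mathcal{A}_n) \;\le\; n \cdot n^{1-c} + 1 \;=\; n^{2-c} + 1
\end{equation*}
distinct grid partitions of the sample can arise, since the $y$-cut is the same for every element of $\mathcal{A}_n$.

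Taking logarithms and dividing by $n$ yields $n^{-1}\log\Delta_n^\ast(\mathcal{A}_n) \le n^{-1}\bigl[(2-c)\log n + O(1)\bigr]$, which tends to zero as $n \to \infty$, proving the lemma. The main obstacle is the breakpoint-counting step, which requires one to (i) confirm that the admissible range of $w$ is bounded below by a quantity of polynomial order $n^{c-1}$, using the equidistant worst case from Lemma \ref{SubLinGrow}, and (ii) verify that $w$-values lying in the same piecewise-constant interval of every $\lfloor x_i/w\rfloor$ induce identical partitions of the data, so that counting breakpoints genuinely counts distinct partitions of the sample. Once these two points are in place, the rest is routine accounting of the floor function's jump set.
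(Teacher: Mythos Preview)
Your proposal is correct and takes a genuinely different route from the paper's own proof.

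The paper never uses the one-parameter structure you identify. Instead it invokes a generic combinatorial bound: along the $x$-axis there are at most $k_x=\lceil n^{1-c}\rceil$ cells, so the number of distinct $x$-partitions of any $n$ points is crudely bounded by $\binom{\lceil n+n^{1-c}\rceil}{\lceil n^{1-c}\rceil}$, and similarly along $y$ with $k_y=\lceil\log_{10}n\rceil$. Multiplying and applying the entropy inequality $\log\binom{n}{k}\le nH(k/n)$ of Csisz\'ar--K\"orner then gives
\[
n^{-1}\log\Delta_n^\ast(\mathcal{A}_n)\;\le\;\tfrac{n+\log_{10}n}{n}\,H\!\Bigl(\tfrac{\log_{10}n}{n+\log_{10}n}\Bigr)+\tfrac{n+n^{1-c}}{n}\,H\!\Bigl(\tfrac{n^{1-c}}{n+n^{1-c}}\Bigr),
\]
and both binary entropies tend to $0$. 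This argument would work for \emph{any} data-dependent grid with at most $n^{1-c}\cdot\log_{10}n$ cells; it is indifferent to how the family is parametrised.

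Your breakpoint count exploits precisely the structure the paper ignores: after rescaling to $[0,1]^2$ and anchoring at the origin (as the paper does in Section~\ref{MIChist}), the $y$-grid is deterministic and the $x$-grid depends only on the scalar $w=n^{c}L_{\max}$, so $\Delta_n^\ast(\mathcal{A}_n)\le n^{2-c}+1$ follows from an elementary count of the jumps of $w\mapsto\lfloor x_i/w\rfloor$. This is sharper (polynomial versus the paper's roughly $\exp(n^{1-c}\log n)$ bound) and avoids the binomial/entropy machinery entirely; the price is that it is tailored to the equal-width, fixed-origin construction and would need a small patch---still only polynomially many extra breakpoints from the $d_n-1$ moving $y$-boundaries---if one worked in the unscaled space where the $y$-grid also depends on the training data's range.
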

\begin{proof}
The maximum number of divisions along x axis is given by $\frac{(n-1)}{n^c}$. Note that $\frac{(n-1)}{n^c}<n^{1-c}$. Let $\lceil x\rceil$ denote the ceiling of positive real number x. So maximum number of distinct partitions along x axis is bounded by $\binom{\lceil n+n^{1-c} \rceil}{\lceil n^{1-c} \rceil}$. Similarly, the maximum number of divisions along y axis being $\ log_{10}(n)$, maximum number of distinct partitions along y-axis is bounded by $\binom{\lceil n+log_{10}(n) \rceil}{\lceil log_{10}(n) \rceil}$. So maximum number of distinct partitions considering both the dimensions, $\Delta_n^\ast (\mathcal{A}_n)$ is bounded by the product $\binom{\lceil n+log_{10}(n) \rceil}{\lceil log_{10}(n) \rceil} \binom{\lceil n+n^{1-c}\rceil}{\lceil n^{1-c} \rceil}$, where $0<c<1$.
\begin{align}
n^{-1}\log(\Delta_n^\ast (\mathcal{A}_n))\\
\leq n^{-1}\log\left[\binom{\lceil n+\log_{10}(n) \rceil}{\lceil \log_{10}(n) \rceil}\binom{\lceil n+n^{1-c}\rceil}{\lceil n^{1-c}\rceil}\right]\\
=\frac{1}{n}\log\left[{\binom{\lceil n+log_{10}(n)\rceil}{\lceil log_{10}(n)\rceil}}\right]+\frac{1}{n}\log\left[{\binom{\lceil n+n^{1-c}\rceil}{\lceil n^{1-c}\rceil}}\right]
\end{align}
 The upper bound of $\log\left[\binom{n}{k}\right]$ is given by \cite{CsiszarKorner} as $nH(k/n)$, where $H(\epsilon)$ represents the binary entropy of $\epsilon$, defined by $H(\epsilon)=-\epsilon\log(1-\epsilon)-(1-\epsilon)\log(\epsilon)$ . It follows that
\begin{align}
n^{-1}\log(\Delta_n^\ast (\mathcal{A}_n))
\label{eqLemsubexpo}
\leq \frac{n+\log_{10}(n)}{n}H(\frac{\log_{10}(n)}{n+\log_{10}(n)})
    +\frac{n+n^{1-c}}{n}H(\frac{n^{1-c}}{n+n^{1-c}})\\
\label{eqentpterm1}
\text{However, as }n \to \infty,\text{ } \frac{\log(n)}{n+\log_{10}(n)}\to 0 \\
\label{eqentpterm2}
\text{ As }n \to \infty,\frac{n^{1-c}}{n+n^{1-c}} \to 0 \text{, where } 0<c<1\\
\text{As }n \to \infty,\text{ } \frac{n+\log_{10}(n)}{n}\to 1 \\
\text{ As }n \to \infty, \frac{n+n^{1-c}}{n}\to 1 \text{, where } 0<c<1
\end{align}
 Note that $H$ is increasing on $(0,1/2]$, H is symmetric about $1/2$, and $H(\epsilon)\to0$ as $\epsilon \to 0$. Therefore, entropies of terms given in (\ref{eqentpterm1}) and (\ref{eqentpterm2}) vanish as $n \to \infty$. So both terms on the right hand side of inequality (\ref{eqLemsubexpo}) vanish as $n\to\infty$. We find that $n^{-1}\log(\Delta_n^\ast (\mathcal{A}_n))=0$ as $n\to\infty$.
\qed
\end{proof}

Lemmas \ref{shrinkCellUni} and \ref{shrinkCell} show that diameter of each cell shrinks to $0$ as $n \to \infty$ for the cases of uniform and non-uniform distributions respectively. For this purpose limit results for maximal spacings have been used. The largest spacings for independent identically distributed (i.i.d.) uniform sequences in [0, 1] have been studied by several authors namely \cite{Levy}, \cite{Darling}, \cite{Slud} and \cite{LimUni}. In this article a theorem by \cite{Darling} (can be found in appendix as theorem \ref{Darlingthm}) has been used to prove lemma \ref{shrinkCellUni}. \\

\begin{lemma}
\label{shrinkCellUni}
For proposed histogram scheme, $\mu\{x : diam(\pi_n[x]) > \gamma \} \to 0$, with probability one for every $\gamma >0$, in case of uniform distribution.
\end{lemma}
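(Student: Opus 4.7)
The plan is to reduce the shrinking–cells condition to the statement that both side lengths of every rectangular cell of the partition tend to zero almost surely. Because each cell is a product of an $x$-interval of length $n^c L_{max}$ and a $y$-interval of length at most $1/d_n = 1/\log_{10}(n)$, its diameter is bounded above by $\sqrt{(n^c L_{max})^2 + (1/\log_{10}(n))^2}$, which is a quantity that does not depend on the particular cell. So once I show that this upper bound tends to zero with probability one, the set $\{x : \mathrm{diam}(\pi_n[x]) > \gamma\}$ will be empty for all sufficiently large $n$, and its $\mu$-measure will therefore be zero.

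First I would handle the $y$-side. After rescaling the range to $[0,1]$, each $y$-bin has length at most $1/\log_{10}(n)$, which is deterministic and tends to zero. No probabilistic input is needed here.

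Next I would handle the $x$-side. For an i.i.d. uniform sample on $[0,1]$, I would apply Darling's theorem (Theorem \ref{Darlingthm} in the appendix), which governs the asymptotic behaviour of the maximal spacing $L_{max}$. The content I need is that $L_{max} = O(\log n / n)$ almost surely, so that
\begin{align*}
n^c L_{max} = O\!\left(\frac{\log n}{n^{1-c}}\right) \longrightarrow 0 \quad \text{a.s.},
\end{align*}
using $0 < c < 1$. Combining the two side–length estimates and taking a maximum over cells yields $\max_{C \in \pi_n} \mathrm{diam}(C) \to 0$ almost surely.

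To finish, fix $\gamma > 0$ and work on the almost–sure event on which $\max_{C} \mathrm{diam}(C) \to 0$. On this event there exists $N = N(\omega, \gamma)$ such that for all $n \ge N$ every cell has diameter strictly less than $\gamma$, hence $\{x : \mathrm{diam}(\pi_n[x]) > \gamma\} = \emptyset$ and so has Lebesgue measure zero. Therefore $\mu\{x : \mathrm{diam}(\pi_n[x]) > \gamma\} \to 0$ with probability one, as required.

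The main subtlety will be extracting the \emph{almost–sure} (not merely in–probability) rate $L_{max} = O(\log n /n)$ from Darling's theorem; if the version cited gives only convergence in distribution of $n L_{max}/\log n$, I would need to strengthen this via a Borel--Cantelli argument along the subsequence $n_k = 2^k$ together with the monotone coupling between $L_{max}$ for sample sizes $n_k$ and $n_{k+1}$. Everything else is a routine combination of the two deterministic–looking bounds.
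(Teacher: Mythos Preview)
Your approach is essentially the same as the paper's: both invoke Darling's theorem (Theorem~\ref{Darlingthm}) to obtain $n^c L_{max}\to 0$ almost surely via $\log n / n^{1-c}\to 0$, and your treatment of the $y$-side length $1/\log_{10}(n)$ and the passage from the uniform diameter bound to the $\mu$-measure is in fact more complete than the paper's proof, which stops after establishing $n^c L_{max}\to 0$. Your closing concern is unnecessary, since the version of Darling's theorem cited in the appendix is already an almost-sure statement, so no Borel--Cantelli strengthening is needed.
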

\begin{proof}
By theorem \ref{Darlingthm}
\begin{align}
| M_n-\log (n)/n | &= C(\log (\log (n))/n)
\end{align}
Multiplying both sides by $n^c$, where c is a constant such that $0<c<1$, for $n \to \infty$
\begin{align}
| n^c.M_n-\log (n)/n^{1-c} | = C(\log (\log (n))/n^{1-c})
\end{align}
Note that, $\log n/n^{1-c}\to 0$, as n goes to infinity. Also, $\log (\log (n))/n^{1-c}\to 0$, as n goes to infinity. So $n^cM_n \to 0$ as $n \to \infty$. It may be noted here that $L_{max}$ is same as $g.M_n$ where $g$ is the range. So $n^cL_{max} \to 0$ as $n \to \infty$, in case of uniform density.
\qed
\end{proof}

Similarly, limit results for ordered spacings (where uniform distribution has not been assumed) can be found in article by \cite{Deheuvels}. Theorems \ref{dev1} and \ref{dev2} proposed by \cite{Deheuvels} have been used to show that the diameter of each cell shrinks to $0$ in the given scheme as $n\to \infty$ for non-uniform distributions in Lemma \ref{shrinkCell}. These two theorems proposed by \cite{Deheuvels} have been stated in the appendix. Using these theorems we state the following lemma.

\begin{lemma}
\label{shrinkCell}
For proposed histogram scheme, $\mu\{x : diam(\pi_n[x]) > \gamma \} \to 0$, with probability one for every $\gamma >0$
\end{lemma}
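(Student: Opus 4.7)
The plan is to reduce the statement to showing that both side lengths of the rectangular cells produced by the scheme shrink to zero almost surely. Each cell $\pi_n[x]$ is a product of an x-interval of length at most $n^c L_{max}$ and a y-interval of length $1/d_n = 1/\log_{10}(n)$, so $diam(\pi_n[x]) \leq \sqrt{(n^c L_{max})^2 + (1/\log_{10}(n))^2}$. If this upper bound tends to $0$ almost surely, then for every $\gamma > 0$ the set $\{x : diam(\pi_n[x]) > \gamma\}$ is eventually empty with probability one, and hence its $\mu$-measure vanishes.

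The y-axis term tends to $0$ deterministically, so the work is to establish $n^c L_{max} \to 0$ almost surely without the uniform assumption used in Lemma \ref{shrinkCellUni}. Here I would invoke Theorems \ref{dev1} and \ref{dev2} of Deheuvels from the appendix. Writing $g$ for the range along the x-axis and $M_n$ for the maximal spacing after rescaling the x-marginal to $[0,1]$, these theorems deliver an almost-sure asymptotic of the form $M_n = \log(n)/n + o(\log(n)/n)$ for samples drawn from a continuous density whose support and regularity are covered by the hypotheses on $f$ and $A$. Multiplying by $n^c$ yields
\begin{align}
n^c L_{max} \;=\; g \cdot n^c M_n \;=\; O\!\left( \frac{\log(n)}{n^{1-c}} \right) \;\longrightarrow\; 0,
\end{align}
paralleling the uniform calculation in Lemma \ref{shrinkCellUni}. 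Combined with the deterministic bound on $1/\log_{10}(n)$, this drives the diameter bound above to zero almost surely.

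The main obstacle is matching the hypotheses of the Deheuvels theorems to the x-marginal of $f$, since the marginal density may degenerate near the boundary of its support even when the joint density is positive on the interior (consider, for example, $A$ equal to a disk, where the x-marginal vanishes at the endpoints of the projection). One must either appeal to the inverse-CDF transform so that Deheuvels' results for uniform spacings transfer back to the original coordinates, or else exploit the local positivity $\int_v f\,dx > 0$ for every open $v$ meeting $A$, together with the compactness of $A$ and the assumption $Closure(Interior(A)) = A$, to dispose of the boundary degeneracies. Once this verification is in place, the remainder is a routine repetition of the asymptotic argument already carried out for the uniform case.
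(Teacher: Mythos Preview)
Your proposal is correct and follows essentially the same route as the paper: invoke the Deheuvels limit results (Theorems \ref{dev1} and \ref{dev2}) to obtain $n^c L_{max} = O(\log(n)/n^{1-c}) \to 0$ almost surely, pair this with the deterministic $1/\log_{10}(n)\to 0$ on the other axis, and conclude that every cell diameter shrinks. The obstacle you flag---verifying that the $x$-marginal actually satisfies Deheuvels' regularity hypotheses---is legitimate, but the paper does not address it either; its proof simply proceeds ``if the conditions stated in theorems \ref{dev1} and \ref{dev2} \ldots\ are fulfilled,'' so your caution here is in fact more careful than the original argument.
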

\begin{proof}
We note that, if the conditions stated in theorems \ref{dev1} and \ref{dev2} given by \cite{Deheuvels} and stated in appendix are fulfilled,
 \begin{align}
\lim{\sup_{n\to\infty}{\frac{nM_k^{(n)}f(x_0)-\log(n)}{\log_2(n)}}}=\frac{2}{k}-\frac{1}{r}\\
\label{eqshcell1}
\implies \lim{\sup_{n\to\infty}{n^c M_k^{(n)}}}=\frac{(\frac{2}{k}-\frac{1}{r})\log_2(n)+\log(n)}{n^{1-c} f(x_0)}
\end{align}
Note that, for $0<c<1$
\begin{align}
\label{eqshcell2}
\lim_{n \to \infty}log(n)/n^{1-c}=0\\
\text{Also,}
\label{eqshcell3}
\lim_{n \to \infty}(2/k-1/r)log_2(n)/n^{1-c}=0
\end{align}

Maximal spacing between the points is given by $L_{max}=g.M_1^{(n)}$, where g is the range along given dimension. It follows from (\ref{eqshcell1}), (\ref{eqshcell2}) and (\ref{eqshcell3}) that, $n^cL_{max}\to0$ as n goes to $\infty$, under the given conditions. $n^cL_{max}$ is taken as length of each cell along one axis then along the other axis say y-axis the length of each division is taken as $g_y/\log_{10}(n)$ where $g_y$ is the range along y axis. This length too goes to $0$ as $n$ goes to infinity. As length along each direction goes to $0$  as $n$ goes to infinity we can say that diameter will shrink to $0$ as $n \to \infty$.
\qed
\end{proof}

As a consequence of theorem \ref{L1cons} given in appendix , and lemmas \ref{SubLinGrow}, \ref{lemSubExpo}, and \ref{shrinkCell} given above, the following theorem can be stated.
\begin{theorem}
\label{L1consproposedthm}
The proposed density estimate is $L_1$ consistent.
\qed
\end{theorem}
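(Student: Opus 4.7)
The plan is to obtain Theorem \ref{L1consproposedthm} as a direct application of Lugosi's theorem on strong $L_1$-consistency of histogram estimates based on data-dependent partitions, i.e.\ Theorem \ref{L1cons} reproduced in the appendix. That theorem has three hypotheses on the sequence of partitions $\mathcal{A}_n$ generated by the data: (i) shrinking cells, namely $\mu\{x : \mathrm{diam}(\pi_n[x]) > \gamma\} \to 0$ almost surely for every $\gamma > 0$; (ii) sub-linear growth of the number of cells, $n^{-1} m(\mathcal{A}_n) \to 0$; and (iii) sub-exponential growth of the combinatorial complexity, $n^{-1} \log \Delta_n^{\ast}(\mathcal{A}_n) \to 0$. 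So all I need to do is invoke the three lemmas established earlier in Section \ref{proposedL1}, each of which pins down one of these hypotheses for the proposed partitioning scheme.

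First I would recall the shape of the proposed partition: along the chosen axis the bin length is $n^{c} L_{\max}$ with $0 < c < 1$, and along the other axis one uses $\lfloor \log_{10}(n) \rfloor$ equal divisions, with the whole procedure symmetrised by swapping axes. Then I would verify hypothesis (ii) by quoting Lemma \ref{SubLinGrow}, which bounds the cell count by $(n-1)\log_{10}(n)/n^{c}$ and gives the desired $o(n)$ growth. Next I would verify (iii) by quoting Lemma \ref{lemSubExpo}, whose binomial-coefficient bound combined with the binary-entropy estimate of Csisz\'ar--K\"orner yields $n^{-1}\log \Delta_n^{\ast}(\mathcal{A}_n) \to 0$. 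Finally, for hypothesis (i), I would appeal to Lemma \ref{shrinkCell} (with Lemma \ref{shrinkCellUni} handling the uniform sub-case): the maximal spacing results of Deheuvels, combined with $d_n = \log_{10}(n)$ divisions along the other direction, imply $n^{c} L_{\max} \to 0$ and $g_y/\log_{10}(n) \to 0$, so the diameter of every cell tends to zero almost surely.

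With the three hypotheses in hand, Theorem \ref{L1cons} applies verbatim and the $L_1$-error of the normalised partitioning density estimate converges to zero with probability one, which is exactly the $L_1$-consistency claimed. The proof is therefore just an assembly statement; there is essentially no obstacle at this final step, since all the analytic work was already carried out inside the preceding lemmas. The only mild care needed is checking that the regularity assumptions imposed on $A$ and $f$ in Section \ref{binlength} (path-connected compact support with boundary of Lebesgue measure zero, $\mathrm{Closure}(\mathrm{Interior}(A)) = A$, and strict positivity of $f$ on open subsets of $A$) match the setting in which Deheuvels' maximal-spacing theorems and Lugosi's theorem are stated, so that the almost-sure convergence in Lemma \ref{shrinkCell} holds in exactly the sense required by Theorem \ref{L1cons}.
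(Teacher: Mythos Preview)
Your proposal is correct and mirrors the paper's own argument exactly: the paper states Theorem \ref{L1consproposedthm} as an immediate consequence of Theorem \ref{L1cons} together with Lemmas \ref{SubLinGrow}, \ref{lemSubExpo}, and \ref{shrinkCell}, which is precisely the assembly you describe. Your additional remark about matching the regularity assumptions is a reasonable piece of care that the paper itself leaves implicit.
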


Though, we cannot say that the proposed method is the best way of getting bin length, the proposed method satisfies the requirements of $L_1$ $consistency$ as defined by \cite{Lugosi}. It is a simple and efficient method as compared to the existing methods. The algorithm for calculating the proposed estimate of dependence measure for a dataset is given below.

\subsection{Algorithm}
\label{Algo}
\begin{enumerate}
  \item{}Sort the data along x-axis. Find the maximum separation between consecutive points along x-axis. Let this be represented by the value $S$.
  \item{}Let the bin size along x axis be $B_x= n^c.S$, where $n$ is the total number of data points and $c$ is a constant $0<c<1$.
  \item{}Divide range y-axis in $log_{10}(n)$ bins of equal length.
  \item{}For each region created by intersection of division $i$ along x axis and $j$ along y  axis, calculate the number of points as $\#_{i,j}$.
  \item{}For each region $i$ along x-axis calculate the number of points as $\#_{i}$.
  \item{}For each region $j$ along y-axis calculate the number of points as $\#_{j}$.
  \item{}Calculate the value $\hat{I}=\sum_{i=1}^{n_x} \sum_{j=1}^{n_y} \frac{\#_{i,j}}{n}\log(\frac{n \#_{i,j}}{\#_{i} \#_{j}})$, where $n_x$ and $n_y$ are number of divisions along each direction.
  \item{}Calculate the value of $\hat{H}_x=\sum_{i=1}^{n_x}\frac{\#_i}{n}\log(\frac{n}{\#_i})$.
  \item{}Calculate the value of $\hat{H}_y=\sum_{j=1}^{n_y}\frac{\#_j}{n}\log(\frac{n}{\#_j})$.
  \item{}Calculate $MIDI_x = \frac{MI}{min(\hat{H}_x,\hat{H}_y)}$
  \item{}Repeat the procedure after swapping x and y axes and take maximum of the two results $MIDI=max(MIDI_x,MIDI_y)$.
\end{enumerate}

\section{Results}
\label{results}
In order to provide experimental results using the proposed method, and comparing them with other indices of association, several datasets have been considered. While conducting the experiments, value of c is taken to be $0.1$. The chosen value of $c$ gives low values of MIDI for uniform distribution and is capable of detecting non linear relationships as demonstrated in this article. Essentially, artificial datasets have been considered here since existing relationship between variables if any is known. This makes judgement of results easier. Experimental results on one real life dataset have been reported here, where in the existence of relationship beteen variables is known.

\subsection{Artificial datasets}
There exist three possible types of relationship between two variables. These are
\begin{enumerate}
\item{}Relationship between two variables $X$ and $Y$ where either X is a function of Y or vice versa.
\item{}Relationship where one to one mapping does not exist but there is a functional form (e.g., cartesian coordinates for circle) between them.
\item{}Relationship is not functional (e.g., Two normal distributions with correlation coefficient $\rho$ where $|\rho|<1$).
\end{enumerate}
MINE given by \cite{reshef}, distance correlation (dcor) given by \cite{SRpaper} have been used for comparison because these methods are well studied and give good results for many data sets as compared to other methods like Spearman correlation coefficient and Pearson correlation coefficient. A comparative study of MINE and dcor is also done by \cite{Tibshirani}. MINE, dcor and MIDI have been calculated for the data sets described in Table \ref{artData} and for data drawn from normal distribution with different values of correlation coefficient.

It may be noted that the value of MINE is higher than the value obtained by MIDI even in case of uniform and gaussian distributions where the variables are generated independently and are expected to be completely uncorrelated. This happens because MINE chooses grid size which maximizes the value of the measure. Also, MIDI is able to detect relationships which remain undetected by dcor.

\subsubsection{Noiseless  data}
\label{noiselessdatapar}
Three types of noiseless data are considered here. Table \ref{artData} provides the functional forms corresponding to these three different associations stated earlier. Note that the same functional forms were used by Reshef et al to illustrate MINE for noiseless data. The generation of data corresponding to the functional forms of  table \ref{artData} is done in the following way.  The points for variable $X$ are drawn randomly from uniform distribution and the value of $Y$ is calculated as a function of $X$. Different sample sizes are considered, and the estimated indices corresponding to these sizes are given in table \ref{noislessDat}. It may be noted in table \ref{noislessDat} that the values of indices are not given for each method and for each size. MIDI is calculated for data sets of sizes 1000, 2000, 5000 and 10000. MINE is calculated for datasets of sizes 1000, 2000 and 5000. Distance correlation is calculated for datasets of sizes 1000, 2000. MINE and distance correlation are calculated for smaller data sets as bigger data sets have very high memory and time requirements. The experiments have been executed on an Intel Pentium D 925 3.00GHz CPU and 1.5 GB memory. As the sample size increases, value of MIDI goes closer to 1 for variables which are related to each other and goes to 0 when the variables are independent.

For comparing the performances in case of non functional relationship, Table \ref{normDat} provides the estimated values of the three indices for different sample sizes in case of bivariate normal distribution with mean 0 and  variance 1 for both $X$ and $Y$, and different values of correlation coefficient($\rho=1,0.95,0.9,0.8,0.7,0.6,0.5,0.3,0.01$). MINE and dcor are not calculated for large data sets as they have very high time and memory requirements. We find that the value of MIDI falls with decrease in correlation coefficient. With increase in number of points, value of MIDI remains similar when correlation coefficient is high but drops significantly when correlation coefficient is small.

\subsubsection{Noisy data}
Another aspect under consideration for judging the quality of results is noise. As more noise is introduced in a dataset, existing relationship between variables is diluted, and thus the value of the index drops. Hence, different datasets with different noise levels are also considered. Due to high time and memory requirements of MINE and dcor these indices are not calculated for large data sets.

To calculate the values of indices, points are drawn from uniform distribution with mean $\mu=0$ and variance $\sigma^2$. The different values of $\sigma^2$ used for generating noise are $10^{-6},10^{-4},10^{-2},10^{-1},1,10^1,10^2$. The noise is added to the variable Y and values of different indices are calculated. The tables \ref{MIDInoisy},\ref{MIDInoisytwo}, \ref{MIDInoisyfive} and \ref{MIDInoisyten} report the values of indices for different sizes of data sets for functional relationships given in table \ref{artData}.  It might be noted that with increasing number of points the values of MIDI for low noise data becomes closer to 1 when relationship exists, and moves to 0 when there is no relationship between the variable. The value of MIDI falls smoothly with increasing noise level.

For studying the Non functional relationship, X and Y are considered to follow bivariate normal distribution as described in section \ref{noiselessdatapar}. Varying levels of noise are added to $Y$ as described in previous paragraph. The results are reported in tables \ref{MIDInoisynorm}, \ref{MIDInoisytwonorm}, \ref{MIDInoisyfivenorm} and \ref{MIDInoisytennorm}. We find that value of MIDI decreases with decreasing correlation coefficient. The value of MIDI also decreases with increase in noise level.

\subsection{Power of different methods}
Yet, another aspect under consideration is the Power of the test where the alternative hypothesis is the non existence of association. If the Power is high, then the statistic under consideration is satisfactory. \cite{Tibshirani} used some functions for measuring the Power. Their functions are used (Table \ref{PowerFuncs}) so that a valid comparison can be provided. The methodology used for comparing power is also same as \cite{Tibshirani}

The results of power of test against alternative hypothesis (non existence of relationship) for MINE, dcor and correlation coefficient have been published by \cite{Tibshirani}. Here the power of all the three indices shown in Fig. \ref{MIDIpower} are calculated by the same method as outlined here. The functions for which power is calculated are given in Table \ref{PowerFuncs}. For each function the value of different indices is calculated for X and Y, where X consists of 1000 points drawn from uniform distribution. Y is generated using the given function. Noise is added to Y. The noise includes points drawn from normal distribution with mean 0 and varying values of variance for different noise levels. 30 different standard deviation levels are considered with $\sigma$ being product of noise scale mentioned in table \ref{PowerFuncs} and $1/10, 2/10 \cdots,30/10$. The noise levels are same as used by \cite{Tibshirani}. For example, for the the case of cubic function, the $\sigma$ values are $10/10, 20/10, 30/10, \cdots ,300/10$. Thus, for each functional form and for each $\sigma$, 500 data sets are generated, each having 1000 points, and values of different indices are calculated. Now the values of the indices are calculated for 500 data sets where no relationship exists by drawing both X and Y from uniform distribution. Thus, we get the values of indices for null scenario. The cut off value of the index is calculated as 0.95 quantile of indices generated for null scenario. Amongst the data sets where a particular functional relationship exists, the ratio of number of data sets for which the value of index is above the cut off for a given $\sigma$ to total number of data sets generated for the same $\sigma$ is taken as power at the particular noise level for relationship under consideration.

In case of linear relationship, distance correlation works better than proposed method and MINE. In all the seven non-linear cases considered, the proposed method is seen to have more power than dcor most of the times. In a way, this was expected. The reason is that dcor is biased towards linear relationship. Distance correlation reduces to correlation coefficient between two variables if the distributions are normal. So, it is expected that linear relationship is better captured by dcor as compared to proposed method.\\
It may be noted that the computational cost of proposed method is very small as compared to other existing methods. The proposed method consumes less time and memory as compared to dcor and MINE, making it easier to use with large datasets.

\subsection{Experiments on real life data set}
The proposed method was used to analyze yeast gene expression data set given by \cite{Spellman} to identify genes whose transcript level varies periodically within the cell cycle. MIDI could identify time regulated genes which remained unidentified by dcor and MINE. Table \ref{realdata} lists some genes which are found to be time regulated by several studies like \cite{SPECTRANS}. MIDI obtained a high value for dependence between gene expression and time for these genes. Rows 1 through 33 show time regulated genes for which only MIDI obtained high values (greater than 0.8), while dcor and MINE obtained low values (less than 0.3).We found that the proposed method is giving significantly high values for several time regulated genes as compared to dcor and MINE. Rows 34, 35 and 36 give examples of genes which are identified to be time regulated by all three methods, MINE, MIDI and dcor. Note that this list is not exhaustive. The table shows examples where MIDI works better than the other two. However, out of the total of 4381 genes (all of them are not known to be time regulated), MIDI obtained values less than either MINE or dcor in 68 cases. For all other $98.45\%$ genes, MIDI values are higher than both MINE and dcor.

\section{Conclusion and Future work}
\label{conc}
This paper introduces a computational method to estimate non linear dependence between two variables. A function of connectivity distance has been used to determine the bin length for calculating the density estimates. The density estimates obtained by this method are shown to be strongly consistent. These density estimates are used to calculate Mutual Information based Dependence Index (MIDI), which is an estimate of an existing dependence measure. Through several experiments we find that, the value of proposed index goes to its true value as number of points increases. We also find that the value of MIDI goes down smoothly if noise is added gradually to variables having perfect dependence. The computational cost for MIDI is low in comparison with both MINE and dcor. MIDI overcomes the problem of over-estimation which occurs with MINE. MIDI is also capable of detecting non linear relationships where either of the variables is a function of other unlike dcor which requires the relation to be one to one and onto. The proposed method is found to work better on a real life data set than competing methods.

It may be noted that histogram based estimation is a type of kernel density estimation. Unlike the more commonly used Gaussian kernel the histograms have the advantage of having a compact base, which is relevant for proposed method of estimating dependence. In the future, experiments can be performed with other types of compact kernels. Here we have chosen to use the basic histogram method because it has the advantage of simplicity which results in quick execution.

\FloatBarrier

\section{Tables and figure}
\begin{table*}[htb!]
\renewcommand{\arraystretch}{1.3}
\caption{Different datasets used for experiments}
\label{artData}
\begin{tabular}{|c|c|c|}

  \hline
  Function & Description&Domain \\
  \hline
  line & $y=x$& $x \in [0,1]$ \\
  half-parabola & $y=x^2$ & $x \in [0,1]$ \\
  parabola & $y=(x-0.5)^2$ & $ x \in [0,1]$ \\
  exponential &$y=10^x$&$ x \in[0,1]$\\
  sinusoidal &$y=\sin(10\pi x +x)$ & $ x \in[0,1]$\\
  sinusoidal (fourier frequency)[sff]&$y=\sin(16\pi x)$&$ x \in [0,1]$\\
  sinusoidal (non fourier frequency)[snff]&$y=\sin(13\pi x )$&$x \in [0,1]$\\
  sinusoidal (varying frequency)[svf]&$y=\sin(7\pi x (1+x))$&$ x \in [0,1]$\\
  circle&\parbox{5cm}{$y=(2*z-1) * (\sqrt{1 - (2*x - 1)^2})$, where $z$ is randomly chosen from $\{0,1\}$} &$ x \in [0,1]$\\
  normal uncorrelated &$\mu_1=0$, $\sigma_1=1$, $\mu_2=0$, $\sigma_2=1$, $\rho=0$& $(x,y) \in (-\infty, \infty) \times (-\infty, \infty)$ \\
  uniform &random number generator & $(x,y) \in [0,1]\times[0,1]$ \\

  \hline
\end{tabular}
\end{table*}

\begin{table*}
\renewcommand{\arraystretch}{1.3}
\caption{Values of dependence measure estimates on artificial noiseless data sets with varying sizes}
\label{noislessDat}
\begin{tabular}{|c||c|c|c||c|c|c||c|c||c|}

  \hline
  Number of points&\multicolumn{3}{|c||}{1000}&\multicolumn{3}{|c||}{2000}&\multicolumn{2}{|c||}{5000}&10000\\
  \hline
  Function & MINE &Dcor & proposed &MINE&Dcor&proposed&MINE&proposed&proposed\\
  \hline
  line & 1.0000 &1.0000& 0.9969 &1.0000&1.0000&0.9971&1.0000&0.9978&0.9992\\
  half-parabola & 1.0000 & 0.9818 & 0.9969 &1.0000&0.9823&0.9980&1.0000&0.9980&0.9983\\
  parabola &1.0000 & 0.5070 & 0.9792&1.0000&0.4930&0.9856&1.0000&0.9916&0.9941\\
  exponential &1.0000 &0.9768 & 0.9770&1.0000&0.9770&0.9940&1.0000&0.9940&0.9941\\
  sinusoidal &1.0000 & 0.1289 & 0.8705&1.0000&0.1359&0.9235&1.0000&0.9625&0.9783\\
  sff&1.0000&0.1799&0.8299&1.0000&0.1277&0.8544&1.0000&0.9456&0.9658\\
  snff&1.0000&0.1259&0.8700&1.0000&0.1056&0.9189&1.0000&0.9588&0.9655\\
  svf&1.0000&0.1734&0.8470&1.0000&0.1752&0.8966&1.0000&0.9471&0.9625\\
  circle&0.6770&0.1627&0.4732&0.7098&0.1590&0.4903&0.7098&0.5000&0.5000\\
  normal uncorrelated &0.1320 &0.0859 &0.0070&0.1012&0.0360&0.0030&0.0810&0.0030&0.002\\
  uniform & 0.1134 & 0.0567 & 0.0530&0.1012&0.0480&0.0480&0.0800&0.0400&0.0340 \\

  \hline
\end{tabular}
\end{table*}

\begin{table*}
\renewcommand{\arraystretch}{1.3}
\caption{Values of dependence measure for bivariate normal  distribution with a given correlation coefficient for data sets of varying sizes}
\label{normDat}
\begin{tabular}{|c||c|c|c||c|c|c||c|c||c|}
  \hline
  Number of points&\multicolumn{3}{|c||}{1000}&\multicolumn{3}{|c||}{2000}&\multicolumn{2}{|c||}{5000}&10000\\
  \hline
  cor $(\rho)$ & MINE &Dcor & proposed & MINE &Dcor & proposed&MINE&proposed&proposed \\
  \hline
  1 & 1.0000 &1.0000& 0.8007&1.0000&1.0000 &0.8181&1.0000&0.8176&0.8211\\
  0.95 & 0.8012 & 0.9326 & 0.6701& 0.7705&  0.9321&0.6727& 0.7528& 0.6775&0.6624 \\
  0.9 &0.6249& 0.8580&0.5400& 0.6237&0.8538& 0.5517&0.6338& 0.5593& 0.5431\\
  0.8 &0.5280& 0.7566&0.3721& 0.4892 &0.7449& 0.3708& 0.4756& 0.3778& 0.3597\\
  0.7 &0.4210& 0.6430&0.2545& 0.3774 &0.6475& 0.2801& 0.3529& 0.3008 & 0.2683\\
  0.6&0.3218&  0.5700&0.2165& 0.2671 &0.5235& 0.2014& 0.2081& 0.2005 &0.1774\\
  0.5&0.2707&  0.4642&0.1630& 0.2245 &0.4467& 0.1568& 0.2162& 0.1192 &0.1266\\
  0.3&0.1744&  0.2511&0.1377& 0.1502 &0.2547& 0.0446& 0.1181& 0.0475 &0.0406\\
  0.01&0.1400&   0.0600&0.0084& 0.1000 &0.0300& 0.0047& 0.0700& 0.0017 &0.0008\\

  \hline
\end{tabular}
\end{table*}

\begin{table*}
\renewcommand{\arraystretch}{1.3}
\caption{MIDI, MINE and dcor results for different noise levels using 1000 data points}
\label{MIDInoisy}
\begin{tabular}{|c|c|c|c|c|c|c|c|c|}

  \hline
  Function/Noise level ($\sigma^2$) &Method& $10^{-6}$ &$10^{-4}$& $10^{-2}$ &$10^{-1}$&$1$&$10$&$10^2$\\
  \hline
  \multirow{3}{*}{line} &MIDI &0.9983 &0.9701 &0.7825&0.3356&0.1496&0.1293&0.0901\\
                        &MINE &1.0000 &1.0000 &0.8752&0.4173&0.1831&0.1382&0.1326\\
                        &dcor &1.0000 &1.0000 &0.9383 &0.6592&0.2339&0.0654&0.0415\\
  \hline
  \multirow{3}{*}{half-parabola} &MIDI& 0.9854 &0.9588 &0.7106&0.2912&0.1309&0.1154&0.0805\\
                                 &MINE   & 1.0000 &1.0000 &0.8001&0.4103&0.1624&0.1159&0.1413\\
                                 &dcor   & 0.9821 &0.9804 &0.9236&0.6548&0.2292&0.1457&0.0502\\
  \hline
  \multirow{3}{*}{parabola} &MIDI &0.9737&0.89364 &0.2345&0.1589&0.1212&0.1254&0.0789\\
                            &MINE &1.0000&0.9763 &0.3301&0.1836&0.1412&0.1283&0.1367\\
                            &dcor &0.4978&0.4899 &0.2879&0.1301&0.0562&0.0432&0.0512\\
  \hline
  \multirow{3}{*}{ exponential} &MIDI &0.9944 &0.9856 &0.9582&0.8927&0.6339&0.2691&0.1134\\
                                &MINE &1.0000 &1.0000 &1.0000&0.9632&0.7139&0.3576&0.1548\\
                                &dcor &0.9773 &0.9773 &0.9771&0.9678&0.8832&0.5626&0.2309\\
 \hline
  \multirow{3}{*}{sinusoidal} &MIDI &0.8751 & 0.8745 &0.8367&0.5756&0.2511&0.1165&0.0869\\
                                &MINE &1.0000 & 1.0000 &0.9756&0.7839&0.348&0.1564&0.1385\\
                                &dcor &0.2595 & 0.2593 &0.2556 &0.2143&0.1584&0.0987&0.0412\\
 \hline
  \multirow{3}{*}{ sinusoidal (fourier frequency)}&MIDI &0.8267 &0.8047 &0.7893&0.5745&0.2678&0.1596&0.0783\\
                                &MINE &1.0000 &1.0000 &0.9945&0.7688&0.2889&0.1648&0.1374\\
                                &dcor &0.1596 &0.1538 &0.1512&0.1287&0.0774&0.0732&0.0451\\

 \hline
  \multirow{3}{*}{  sinusoidal (non fourier frequency)} &MIDI&0.8545 &0.8543 &0.8275&0.5478&0.2463&0.1574&0.0914\\
                                                        &MINE &1.0000&1.0000 &0.9732&0.8782&0.2919&0.1587&0.1416\\
                                                        &dcor &0.1282&0.1285 &0.1233&0.12&0.0775&0.0657&0.0548\\
 \hline
  \multirow{3}{*}{  sinusoidal (varying frequency)} &MIDI&0.8314&0.8198 &0.7856 &0.5571&0.2352&0.1267&0.0805\\
                                                    &MINE&1.0000&1.0000 &0.9784&0.7646&0.3023&0.1567&0.1389\\
                                                    &dcor&0.1367&0.1367 &0.1367&0.1354&0.1051&0.0545&0.0512\\
 \hline
  \multirow{3}{*}{  circle} &MIDI&0.4858&0.4817 &0.4784 &0.2778&0.1231&0.0734&0.0654\\
                            &MINE& 0.6567&0.6489 &0.6445 &0.4578&0.1293&0.1423&0.1326\\
                            &dcor& 0.1678&0.1658 &0.1502 &0.1501&0.0560&0.0312&0.0562\\
  \hline
  \multirow{3}{*}{normal uncorrelated} &MIDI&0.0192&0.0121 &0.0127 &0.0249&0.0127&0.0127&0.0072\\
                                                    &MINE&0.1353&0.1245 &0.1332&0.1337&0.1355&0.1257&0.1389\\
                                                    &dcor&0.0515&0.0670 &0.0571&0.0419&0.0573&0.0459&0.0592\\
 \hline
  \multirow{3}{*}{  uniform} &MIDI&0.0551&0.0624 &0.0556 &0.0537&0.0599&0.0637&0.0538\\
                            &MINE& 0.1314&0.1407 &0.1242 &0.1319&0.1293&0.1309&0.1442\\
                            &dcor& 0.0593&0.1026 &0.0593 &0.0524&0.0560&0.0468&0.0562\\
  \hline

\end{tabular}
\end{table*}

\begin{table*}
\renewcommand{\arraystretch}{1.3}
\caption{ MIDI, MINE and dcor results for different noise levels using 2000 data points}
\label{MIDInoisytwo}
\begin{tabular}{|c|c|c|c|c|c|c|c|c|}

  \hline
  Function/Noise level ($\sigma^2$) &Method& $10^{-6}$ &$10^{-4}$& $10^{-2}$ &$10^{-1}$&$1$&$10$&$10^2$\\

 \hline
  \multirow{3}{*}{  line}&MIDI& 0.9984 &0.9738 &0.6956 &0.3365&0.1105&0.1000&0.0798\\
                        &MINE&  1.0000 &1.0000 &0.8202&0.4171&0.1510&0.1101&0.1115\\
                        &dcor & 1.0000 &1.0000 &0.9312 &0.6403&0.2756&0.0982&0.0412\\
 \hline
  \multirow{3}{*}{  half-parabola}&MIDI & 0.9952 &0.9697 &0.6911&0.2932&0.1119&0.1&0.0911\\
                                    &MINE& 1.0000 &1.0000 &0.8216&0.4138&0.1536&0.1201&0.1092\\
                                    &dcor& 0.9820 &0.9857 &0.9292&0.6312&0.2157&0.0984&0.0537\\
 \hline
  \multirow{3}{*}{  parabola}&MIDI &0.9771&0.8936 &0.2434&0.1259&0.0948&0.0884&0.0856\\
                             &MINE &1.0000&0.9609 &0.2889&0.1263&0.1167&0.1183&0.1010\\
                             &dcor &0.4967&0.4839 &0.2736&0.0912&0.0564&0.0455&0.0512\\
 \hline
  \multirow{3}{*}{  exponential}&MIDI &0.9941 &0.9932 &0.9867&0.9092&0.6483&0.2349&0.0918\\
                                &MINE &1.0000 &1.0000 &1.0000&0.9465&0.7278&0.3297&0.1210\\
                                &dcor &0.9778 &0.9787 &0.9773&0.9678&0.8834&0.5664&0.2364\\
 \hline
  \multirow{3}{*}{  sinusoidal} &MIDI &0.9275 & 0.9275 &0.8556&0.5674&0.2438&0.109&0.0756\\
                                &MINE &1.0000 & 1.0000 &0.9778&0.7562&0.2894&0.1196&0.1012\\
                                &dcor &0.2591 & 0.2591 &0.2556 &0.2149&0.1587&0.0923&0.0476\\
   \hline
  \multirow{3}{*}{sinusoidal (fourier frequency)}&MIDI& 0.8945 &0.8876 &0.8168&0.5679&0.2575&0.119&0.07012\\
                                                &MINE&1.0000 &0.9942 &0.9789&0.7689&0.2864&0.1317&0.1089\\
                                                &dcor&0.1184 &0.1182 &0.1125&0.1035&0.0743&0.0527&0.0536\\
  \hline
  \multirow{3}{*}{ sinusoidal (non fourier frequency)}&MIDI &0.8986 &0.8873 &0.8495&0.5636&0.2212&0.1226&0.0896\\
                                                     &MINE &1.0000&0.9912 &0.9873&0.8214&0.22743&0.1261&0.1029\\
                                                     &dcor &0.1156&0.1195 &0.1193&0.1013&0.0776&0.0396&0.0295\\
   \hline
  \multirow{3}{*}{sinusoidal (varying frequency)}&MIDI &0.8976&0.8969 &0.8664 &0.5389&0.2524&0.1256&0.0895\\
                                                &MINE &1.000&1.000 &0.9757&0.7644&0.3032&0.1348&0.1191\\
                                                &dcor &0.1686&0.1656 &0.1539&0.1137&0.10524&0.0518&0.0238\\
 \hline
  \multirow{3}{*}{  circle} &MIDI &0.4947&0.4919 &0.4785 &0.2756&0.1235&0.0759&0.0512\\
                            &MINE&0.6858&0.6814 &0.6448 &0.4239&0.1109&0.1097&0.1116\\
                            &dcor &0.1601&0.1600 &0.1576 &0.1437&0.0609&0.0395&0.0208\\

  \hline
  \multirow{3}{*}{normal uncorrelated} &MIDI&0.0066&0.0058 &0.0050 &0.0068&0.0053&0.0039&0.0059\\
                                                    &MINE&0.1107&0.1010 &0.1084&0.1145&0.1126&0.1257&0.1389\\
                                                    &dcor&0.0515&0.0466 &0.0389&0.0400&0.0356&0.0496&0.0396\\
 \hline
  \multirow{3}{*}{  uniform} &MIDI&0.0470&0.0550 &0.0476 &0.0504&0.0427&0.0476&0.0452\\
                            &MINE& 0.1054&0.1120 &0.1102 &0.1080&0.1048&0.1156&0.1060\\
                            &dcor& 0.0480&0.0416 &0.0375 &0.0474&0.0462&0.0325&0.0465\\
  \hline

\end{tabular}
\end{table*}

\begin{table*}
\renewcommand{\arraystretch}{1.3}
\caption{MIDI and MINE results for different noise levels using 5000 data points}
\label{MIDInoisyfive}
\begin{tabular}{|c|c|c|c|c|c|c|c|c|}

  \hline
  Function/Noise level ($\sigma^2$)&Method & $10^{-6}$ &$10^{-4}$& $10^{-2}$ &$10^{-1}$&$1$&$10$&$10^2$\\
 \hline
  \multirow{2}{*}{  line} &MIDI& 0.9985 &0.9643 &0.6969 &0.3372&0.1125&0.0915&0.0773\\
                            &MINE& 1.0000 &0.9983 &0.7865&0.3174&0.1248&0.0913&0.0792\\

 \hline
  \multirow{2}{*}{ half-parabola} &MIDI& 0.9952 &0.9647 &0.6914&0.2996&0.1162&0.0954&0.0615\\
                                    &MINE & 1.0000 &0.9984 &0.8679&0.3136&0.1276&0.0975&0.0798\\
  \hline
  \multirow{2}{*}{ parabola}&MIDI &0.9776&0.8936 &0.2467&0.1229&0.0976&0.0813&0.0698\\
                            &MINE&0.9976&0.9558 &0.2712&0.1543&0.9863&0.0854&0.0813\\

  \hline
  \multirow{2}{*}{ exponential}&MIDI &0.9940 &0.9932 &0.9717&0.9025&0.5814&0.2319&0.0726\\
                                &MINE&1.0000 &1.0000 &1.0000&0.9465&0.6948&0.2865&0.1039\\
  \hline
  \multirow{2}{*}{ sinusoidal}&MIDI &0.9673 & 0.9587 &0.8532&0.5887&0.2394&0.1567&0.0698\\
                                &MINE&1.0000 & 1.0000 &0.9664&0.7785&0.2646&0.1584&0.0816\\
  \hline
  \multirow{2}{*}{ sinusoidal (fourier frequency)}&MIDI&0.9325 &0.9335 &0.8854&0.5432&0.2125&0.1078&0.0697\\
                                                    &MINE&1.0000 &0.9956 &0.9673&0.7856&0.2634&0.1180&0.0798\\
  \hline
  \multirow{2}{*}{ sinusoidal (non fourier frequency)}&MIDI&0.9513 &0.9346&0.8815&0.5558&0.2185&0.1031&0.0714\\
                                                        &MINE&1.0000&0.9931 &0.9652&0.7714&0.2679&0.1068&0.0793\\
 \hline
  \multirow{2}{*}{  sinusoidal (varying frequency)} &MIDI&0.9532&0.9502&0.8843&0.5409&0.2167&0.1006&0.0712\\
                                                    &MINE&1.0000&0.9980&0.9574&0.7828&0.2674&0.0959&0.0812\\
  \hline
  \multirow{2}{*}{ circle}&MIDI&0.4959&0.4814 &0.4447 &0.2219&0.0764&0.0694&0.0396\\
                            &MINE&0.6903&0.6598 &0.6443 &0.4583&0.1209&0.1004&0.0712\\
  \hline

\multirow{3}{*}{normal uncorrelated} &MIDI&0.0029&0.0024 &0.0049 &0.0018&0.0023&0.0040&0.0031\\
                                    &MINE&0.0763&0.0851 &0.0756&0.07424&0.0815&0.0794&0.0767\\
 \hline
  \multirow{3}{*}{  uniform} &MIDI&0.0481&0.0392 &0.0429 &0.0409&0.0466&0.0390&0.0409\\
                            &MINE& 0.0775&0.0810 &0.0815 &0.0801&0.0823&0.0792&0.0828\\
  \hline

\end{tabular}
\end{table*}

\begin{table*}
\renewcommand{\arraystretch}{1.3}
\caption{MIDI Results for different noise levels using 10000 data points}
\label{MIDInoisyten}
\begin{tabular}{|c|c|c|c|c|c|c|c|}

  \hline
  Function/Noise level ($\sigma^2$) & $10^{-6}$ &$10^{-4}$& $10^{-2}$ &$10^{-1}$&$1$&$10$&$10^2$\\
  \hline
  line & 0.9987 &0.9656 &0.6694 &0.2432&0.1123&0.0678&0.0627\\
  half-parabola & 0.9952 &0.9646 &0.6914&0.2874&0.1124&0.0858&0.0653\\
  parabola &0.9864&0.8654 &0.2485&0.1506&0.0987&0.0795&0.0697\\
  exponential &0.9971 &0.9954 &0.9667&0.8823&0.5695&0.2454&0.0690\\
  sinusoidal &0.9812 & 0.9736 &0.8557&0.5896&0.2314&0.0926&0.0798\\
  sinusoidal (fourier frequency)&0.9735 &0.9648 &0.8426&0.5343&0.2218&0.1109&0.0606\\
  sinusoidal (non fourier frequency)&0.9767 &0.9689&0.8529&0.5446&0.2297&0.0818&0.0704\\
  sinusoidal (varying frequency)&0.9704&0.9603 &0.8534 &0.5268&0.2217&0.0917&0.0742\\
  circle&0.4965&0.4812 &0.4276 &0.1737&0.0547&0.0549&0.0587\\
normal uncorrelated &0.0018&0.0016 &0.0013 &0.0019&0.0011&0.0007&0.0009\\
uniform&0.0396&0.0392 &0.0386 &0.0333&0.0468&0.0320&0.0389\\

  \hline
\end{tabular}
\end{table*}

\begin{table*}
\renewcommand{\arraystretch}{1.3}
\caption{MIDI, MINE and dcor results for different noise levels using 1000 data points for bivariate normal distribution}
\label{MIDInoisynorm}
\begin{tabular}{|c|c|c|c|c|c|c|c|c|}

  \hline
  Correlation coeff/Noise level ($\sigma^2$) &Method& $10^{-6}$ &$10^{-4}$& $10^{-2}$ &$10^{-1}$&$1$&$10$&$10^2$\\

 \hline
  \multirow{3}{*}{  1 }&MIDI  &0.8121 &0.8101&0.7989&0.6445&0.2824&0.0464&0.0157\\
                        &MINE  &1.0000    &1.0000  &0.9765   &0.8145    &0.4132   &0.1855   &0.1178\\
                        &dcor &1.0000    &0.9963  &0.9941   &0.9317    &0.6678   &0.2737   &0.0412\\
 \hline
  \multirow{3}{*}{  0.95}&MIDI& 0.6865 &0.6558 &0.6501&0.5282&0.2714&0.0155&0.0123\\
                        &MINE&0.7617 &0.7608 &0.7212 &0.6505   &0.3816   &0.1516   &0.1367\\
                        &dcor&0.9114 &0.9215 &0.9109 &0.86576   &0.6312   &0.2201   &0.0794\\
 \hline
  \multirow{3}{*}{  0.9}&MIDI &0.5212&0.5354 &0.5263&0.4674&0.2412&0.0237&0.0071\\
                        &MINE&0.6814 &0.6505 &0.6126 &0.5637   &0.3509   &0.1796   &0.1449\\
                        &dcor&0.8664 &0.8657 &0.8419 &0.8013   &0.5943   &0.2516   &0.0659\\
 \hline
  \multirow{3}{*}{  0.8}&MIDI &0.3764 &0.3679 &0.3243&0.3278&0.2091&0.0236&0.0102\\
                        &MINE&0.5001  &0.5101 &0.5297   &0.4712   &0.3219   &0.1311 &0.1453\\
                        &dcor&0.7578  &0.7417 &0.7532   &0.7267   &0.5323   &0.1607 &0.0516\\
 \hline
  \multirow{3}{*}{  0.7}&MIDI & 0.2601 &0.2600&0.2597&0.2254&0.1096&0.0305&0.0077\\
                        &MINE&0.3627&0.4679  &0.4234  &0.3684   &0.2882   &0.1498   &0.1205\\
                        &dcor&0.6225&0.6431  &0.6439  &0.6295   &0.4842   &0.2010   &0.0496\\
 \hline
  \multirow{3}{*}{  0.6}&MIDI &0.1912 &0.1911 &0.1853&0.1149&0.0815&0.0256&0.0050\\
                        &MINE&0.3549 &0.3192 &0.3189&0.2855&0.2365&0.1401&0.1384\\
                        &dcor&0.5762 &0.5384 &0.5559&0.5214&0.4184&0.1677&0.0742\\

 \hline
  \multirow{3}{*}{  0.5}&MIDI &0.1198 &0.1203 &0.1219&0.1163&0.0738&0.0237&0.0094\\
                        &MINE &0.2643 &0.2648 &0.2705&0.2375&0.1839&0.1596&0.1263\\
                        &dcor &0.4677 &0.4674 &0.4701&0.4183&0.2915&0.1584&0.0635\\
 \hline
  \multirow{3}{*}{  0.3}&MIDI &0.0578 &0.0594 &0.0527&0.0472&0.0268&0.0100&0.0078\\
                        &MINE& 0.1745 &0.1593 &0.1724&0.1674&0.1465&0.1315&0.1300\\
                        &dcor&0.2521 &0.2520 &0.2519&0.2147&0.1987&0.0925&0.0456\\
 \hline
  \multirow{3}{*}{  0.01 }&MIDI  &0.0035&0.0031 &0.0029 &0.0032&0.0030&0.0027&0.0031\\
                        &MINE&0.1312&0.1285 &0.1314 &0.1358&0.1349&0.1237&0.1372\\
                        &dcor&0.0503&0.0514 &0.0495 &0.0496&0.0435&0.0503&0.0459\\

  \hline
\end{tabular}
\end{table*}

\begin{table*}
\renewcommand{\arraystretch}{1.3}
\caption{MIDI, MINE and dcor results for different noise levels using 2000 data points for bivariate normal data}
\label{MIDInoisytwonorm}
\begin{tabular}{|c|c|c|c|c|c|c|c|c|}

  \hline
  Correlation coeff/Noise level ($\sigma^2$) &Method& $10^{-6}$ &$10^{-4}$& $10^{-2}$ &$10^{-1}$&$1$&$10$&$10^2$\\
 \hline
  \multirow{3}{*}{  1 }&MIDI  &0.8212 &0.8208&0.8203&0.6676&0.2512&0.0534&0.01234\\
                        &MINE&1    &1.0000  &0.9634   &0.7745    &0.3528   &0.1465   &0.1167\\
                        &dcor&1    &0.9923  &0.9919   &0.9318    &0.6545   &0.2673   &0.1126\\
 \hline
  \multirow{3}{*}{  0.95}&MIDI& 0.6576 &0.6534 &0.6463&0.5675&0.2495&0.0519&0.0120\\
                         &MINE&0.7656 &0.7736 &0.7459 &0.6648   &0.3319   &0.1573   &0.1132\\
                         &dcor&0.9348 &0.9375 &0.9368 &0.86536   &0.5919   &0.2754   &0.1185\\
 \hline
  \multirow{3}{*}{  0.9}&MIDI &0.5575&0.5633 &0.5631&0.5589&0.2456&0.0473&0.0125\\
                        &MINE&0.6547 &0.6589 &0.6539 &0.5672   &0.3294   &0.1385   &0.0987\\
                        &dcor& 0.8794 &0.86539 &0.8657 &0.8284   &0.5874   &0.2185   &0.1101\\
 \hline
  \multirow{3}{*}{  0.8}&MIDI &0.4228 &0.3956 &0.3323     &0.3318   &0.2049    &0.0467&0.0121\\
                        &MINE&0.4587  &0.4514 &0.4537   &0.4589   &0.2743   &0.1379 &0.1143\\
                        &dcor&0.7587  &0.7416 &0.7534   &0.7267   &0.5329   &0.1656 &0.0712\\
 \hline
  \multirow{3}{*}{  0.7}&MIDI & 0.2651 &0.2639&0.2634&0.2252&0.1185&0.0297&0.0081\\
                        &MINE&0.3818&0.3845  &0.3468  &0.3574   &0.2356   &0.1312   &0.1109\\
                        &dcor&0.6655&0.6587  &0.6552  &0.6376   &0.4748   &0.2105   &0.0814\\
 \hline
  \multirow{3}{*}{  0.6}&MIDI &0.1967 &0.1954 &0.1832&0.1146&0.0867&0.0253&0.0052\\
                        &MINE&0.3284 &0.3192 &0.3563&0.2923&0.2598&0.1248&0.1034\\
                        &dcor&0.5426 &0.5345 &0.5311&0.5301&0.4189&0.2498&0.0794\\
 \hline
  \multirow{3}{*}{  0.5}&MIDI &0.1491 &0.1491 &0.1487&0.1375&0.0767&0.0325&0.0106\\
                        &MINE&0.2113 &0.2254 &0.2168&0.2239&0.1628&0.1155&0.1128\\
                        &dcor&0.4335 &0.4453 &0.4357&0.4402&0.3298&0.1217&0.0654\\
 \hline
  \multirow{3}{*}{  0.3}&MIDI &0.0398 &0.0399 &0.0401&0.0390&0.0212&0.0210&0.0060\\
                        &MINE&0.1512 &0.1679 &0.1516&0.1458&0.1638&0.1031&0.1000\\
                        &dcor&0.27 &0.26&0.28&0.23&0.13&0.07&0.04\\
 \hline
  \multirow{3}{*}{ 0.01}&MIDI  &0.0039&0.0038 &0.0046 &0.0030&0.0031&0.0028&0.0030\\
                        &MINE &0.1193&0.1225 &0.1156 &0.1039&0.1953&0.09849&0.1148\\
                        &dcor &0.0350&0.0425 &0.0378 &0.0335&0.0321&0.0383&0.0409\\

  \hline
\end{tabular}
\end{table*}

\begin{table*}
\renewcommand{\arraystretch}{1.3}
\caption{MIDI Results for different noise levels using 5000 data points for bivariate normal distribution}
\label{MIDInoisyfivenorm}
\begin{tabular}{|c|c|c|c|c|c|c|c|c|}

  \hline
  Correlation coeff/Noise level ($\sigma^2$) &Method& $10^{-6}$ &$10^{-4}$& $10^{-2}$ &$10^{-1}$&$1$&$10$&$10^2$\\  \hline
 \hline
  \multirow{2}{*}{  1 }&MIDI  &0.8271 &0.8123&0.7938&0.6563&0.2774&0.0389&0.0080\\
                        &MINE&1.0000    &1.0000 &0.9532  &0.7649    &0.3375   &0.1227   &0.0998\\
 \hline
  \multirow{2}{*}{  0.95}&MIDI& 0.6884 &0.6867 &0.6724&0.5756&0.2786&0.0316&0.0085\\
                            &MINE&0.7689 &0.7465 &0.7436 &0.6487   &0.3387   &0.1354   &0.0958\\
 \hline
  \multirow{2}{*}{  0.9}&MIDI &0.5368&0.5337 &0.5314&0.4635&0.2227&0.0384&0.0058\\
                        &MINE&0.6302 &0.6548 &0.6467 &0.5698   &0.3287   &0.1387   &0.0879\\
 \hline
  \multirow{2}{*}{  0.8}&MIDI &0.3898 &0.3895 &0.3886&0.3556&0.1724&0.0345&0.0054\\
                        &MINE &0.4517  &0.4398 &0.4412   &0.4501   &0.2711   &0.1267 &0.0795\\
 \hline
  \multirow{2}{*}{  0.7} &MIDI& 0.2698 &0.2468&0.2545&0.2254&0.1287&0.0163&0.0021\\
                        &MINE&0.3513&0.3545  &0.3468  &0.3558   &0.2349   &0.1382   &0.1159\\
 \hline
  \multirow{2}{*}{  0.6} &MIDI&0.1973 &0.1971 &0.1868&0.1198&0.0887&0.0257&0.0052\\
                            &MINE&0.2748 &0.2676 &0.2769&0.2696&0.1521&0.1098&0.0812\\
 \hline
  \multirow{2}{*}{  0.5} &MIDI&0.1239 &0.1228 &0.1217&0.1137&0.0756&0.0162&0.0027\\
                        &MINE &0.2167 &0.2107 &0.2187&0.2200&0.1359&0.0936&0.0978\\
 \hline
  \multirow{2}{*}{  0.3} &MIDI&0.0489 &0.0433 &0.0501&0.0415&0.0237&0.0201&0.0021\\
                        &MINE&0.1219 &0.1143 &0.1321&0.1129&0.1119&0.0873&0.0855\\
 \hline
  \multirow{2}{*}{  0.01 }  &MIDI&0.0036&0.0035 &0.0024 &0.0032&0.0031&0.0028&0.0034\\
                            &MINE&0.0801&0.0932 &0.0897 &0.0947&0.1023&0.0963&0.0885\\

  \hline
\end{tabular}
\end{table*}

\begin{table*}
\renewcommand{\arraystretch}{1.3}
\caption{MIDI Results for different noise levels using 10000 data points for bivariate normal data}
\label{MIDInoisytennorm}
\begin{tabular}{|c|c|c|c|c|c|c|c|}

  \hline
  Correlation coeff/Noise level ($\sigma^2$) & $10^{-6}$ &$10^{-4}$& $10^{-2}$ &$10^{-1}$&$1$&$10$&$10^2$\\  \hline
  1   &0.8224 &0.8197&0.8175&0.6569&0.2674&0.0488&0.0021\\
  0.95& 0.6671 &0.6132 &0.6121&0.5574&0.2269&0.0459&0.0046\\
  0.9 &0.5132&0.4878 &0.4869&0.4386&0.1746&0.0473&0.0041\\
  0.8 &0.3789 &0.3673 &0.3698&0.3269&0.1677&0.0238&0.0051\\
  0.7 & 0.2586 &0.2485&0.2501&0.2253&0.1187&0.0256&0.0031\\
  0.6 &0.1290 &0.1197 &0.1184&0.1196&0.0672&0.01267&0.0029\\
  0.5 &0.1288 &0.1246 &0.1255&0.1139&0.0727&0.0168&0.0026\\
  0.3 &0.0396 &0.0386 &0.0386&0.0400&0.0284&0.0058&0.0020\\
  0.01   &0.0035&0.0029 &0.0014 &0.0032&0.0015&0.0026&0.0012\\

  \hline
\end{tabular}
\end{table*}

\begin{table*}
\renewcommand{\arraystretch}{1.3}
\caption{Different data sets used determining power. Values of standard deviation used for each function is given by product of corresponding noise scale with $1/10, 2/10, \cdots ,30/10 $}
\label{PowerFuncs}
\begin{tabular}{|c|c|c|c|}

  \hline
  Function & Description&Domain&Noise scale\\
  \hline
  line & $y=x$& [0,1]&1 \\
  quadratic & $y=4*(x-.5)^2$ & [0,1]&1 \\
  cubic & $y=128*(x-1/3)^3-48*(x-1/3)^2-12*(x-1/3)$ & [0,1]&10 \\
  sin 1/8 &$y=\sin(4*pi*x)$&[0,1]&2\\
  sin 1/2 &$y=\sin(16*pi*x)$ & [0,1]&1\\
  fourth root&$y=x^{1/4}$&[0,1]&1\\
  circle&\parbox{7cm}{$y=(2*z-1) * (\sqrt{1 - (2*x - 1)^2})$, where $z$ is randomly chosen from $\{0,1\}$ } &[0,1]&1/4\\
  step &$y = 1 \text{,if }(x > 0.5)\text{, }y=0\text{,otherwise}$&[0,1]&5\\

  \hline
\end{tabular}
\end{table*}

\begin{table*}
\renewcommand{\arraystretch}{1.3}
\caption{Analysis of Yeast genes}
\label{realdata}
\begin{tabular}{|c|c|c|c|c|c|}

  \hline
  Number & Name of genes&Genes identifier&MINE&dcor&MIDI\\
  \hline
1&NUP170&YBL079W&0.2835&0.2991&0.8235\\
2&PCH2&YBR186W&0.2160&0.2581&0.8250\\
3&PERT18&YCR020C&0.2614&0.2737&0.8230\\
4&LUC7&YDL087C&0.2588&0.2695&0.8150\\
5&YDL228C&YDL228C&0.2588&0.2914&1.0000\\
6&YDL238C&YDL238C&0.2588&0.2649&0.8333\\
7&RAD61&YDR014W&0.1742&0.2562&0.8050\\
8&APC4&YDR118W&0.2835&0.2935&0.8333\\
9&STE14&YDR410C&0.2800&0.2433&0.8141\\
10&SAM2&YDR502C&0.2858&0.2774&0.8351\\
11&MMS21&YEL019C&0.2547&0.2648&0.8141\\
12&HAT2&YEL056W&0.2558&0.2951&0.8050\\
13&SPR6&YER115C&0.2445&0.2973&0.8369\\
14&FAU1&YER183C&0.2858&0.2929&0.8050\\
15&CAF16&YFL028C&0.2614&0.2942&0.8073\\
16&YFR041C&YFR041C&0.2983&0.2737&0.9344\\
17&TOS3&YGL179C&0.2866&0.2392&0.8318\\
18&YGR251W&YGR251W&0.2614&0.2534&0.8333\\
19&YHR045W&YHR045W&0.2230&0.2457&0.8318\\
20&YIL055C&YIL055C&0.2473&0.2913&0.9047\\
21&PEX1&YKL197C&0.2665&0.2964&0.8923\\
22&RRN5&YLR141W&0.2149&0.2777&0.8230\\
23&YLR152C&YLR152C&0.2121&0.2981&0.8369\\
24&YLR287C&YLR287C&0.2762&0.2891&0.8032\\
25&SNZ1&YMR096W&0.2762&0.2762&0.8032\\
26&YNL324W&YNL324W&0.2762&0.2422&0.8318\\
27&YOL138C&YOL138C&0.2614&0.2697&0.8318\\
28&OST3&YOR085W&0.2800&0.2885&0.8834\\
29&ELG1&YOR144C&0.2753&0.2701&0.8333\\
30&SPP2&YOR148C&0.2762&0.2613&0.8541\\
31&SVL3&YPL032C&0.2614&0.2382&0.8150\\
32&SSN3&YPL042C&0.2160&0.2495&0.9380\\
33&CUP9&YPL177C&0.2365&0.2802&0.8318\\
34&GIT1&YCR098C&0.9986&0.9388&0.9424\\
35&CPR6&YLR216C&0.9986&0.8917&0.9358\\
36&HSP12&YFL014W&0.8144&0.4845&0.8848\\

  \hline
\end{tabular}
\end{table*}

\FloatBarrier
\begin{figure*}[h!]
\caption{Power of different indices of dependence}
\label{MIDIpower}
\includegraphics[width=\textwidth]{}
\end{figure*}

\FloatBarrier


\bibliographystyle{spbasic}      
\bibliography{MIDIbib}   

\begin{thebibliography}{42}
\providecommand{\natexlab}[1]{#1}
\providecommand{\url}[1]{{#1}}
\providecommand{\urlprefix}{URL }
\expandafter\ifx\csname urlstyle\endcsname\relax
  \providecommand{\doi}[1]{DOI~\discretionary{}{}{}#1}\else
  \providecommand{\doi}{DOI~\discretionary{}{}{}\begingroup
  \urlstyle{rm}\Url}\fi
\providecommand{\eprint}[2][]{\url{#2}}

\bibitem[{Appel and Russo(2002)}]{appelRusso}
Appel MJB, Russo RP (2002) The connectivity of a graph on uniform points on
  [0,1]d. Statistics \& Probability Letters 60(4):351--357

\bibitem[{Ash(1965)}]{Ash}
Ash RB (1965) Information theory. Interscience

\bibitem[{Bairamov et~al(2010)Bairamov, Berred, and Stepanov}]{LimUni}
Bairamov I, Berred A, Stepanov A (2010) Limit results for ordered uniform
  spacings. Statistical Papers 51(1):227--240

\bibitem[{Chen et~al(2012)Chen, Liu, and Ye}]{Chen2012}
Chen J, Liu J, Ye J (2012) Learning incoherent sparse and low-rank patterns
  from multiple tasks. ACM Trans Knowl Discov Data 5(4):22:1--22:31

\bibitem[{Csiszar and Korner(1982)}]{CsiszarKorner}
Csiszar I, Korner J (1982) Information Theory: Coding Theorems for Discrete
  Memoryless Systems. Academic Press, Inc., Orlando, FL, USA

\bibitem[{Dalgleish(1999)}]{dalgleish}
Dalgleish A (1999) The relevance of non-linear mathematics (chaos theory) to
  the treatment of cancer, the role of the immune response and the potential
  for vaccines. QJM: An International Journal of Medicine 92(6):347--359

\bibitem[{Darbellay and Vajda(1999)}]{DarbellayVajda}
Darbellay GA, Vajda I (1999) Estimation of the information by an adaptive
  partitioning of the observation space. Information Theory, IEEE Transactions
  on 45(4):1315--1321

\bibitem[{Darbellay and Wuertz(2000)}]{darbelly}
Darbellay GA, Wuertz D (2000) The entropy as a tool for analysing statistical
  dependences in financial time series. Physica A: Statistical Mechanics and
  its Applications 287(3-4):429 -- 439

\bibitem[{Darling(1953)}]{Darling}
Darling D (1953) On a class of problems related to the random division of an
  interval. The Annals of Mathematical Statistics 24:239--253

\bibitem[{Deheuvels(1984)}]{Deheuvels}
Deheuvels P (1984) Strong limit theorems for maximal spacings from a general
  univariate distribution. The Annals of Probability 12(4):1181--1193

\bibitem[{Dionisio et~al(2004)Dionisio, Menezes, and
  Mendes}]{physicaA:dionisio}
Dionisio A, Menezes R, Mendes DA (2004) Mutual information: a measure of
  dependency for nonlinear time series. Physica A: Statistical Mechanics and
  its Applications 344(1):326--329

\bibitem[{Dionisio et~al(2006)Dionisio, Menezes, and Mendes}]{Dionisio}
Dionisio A, Menezes R, Mendes DA (2006) Entropy-based independence test.
  Nonlinear Dynamics 44(1):351--357

\bibitem[{Granger and Maasoumi(2000)}]{granger:2000}
Granger CW, Maasoumi E (2000) A dependence metric for nonlinear time series.
  Econometric Society World Congress 2000 Contributed Papers 0421, Econometric
  Society, \urlprefix\url{http://ideas.repec.org/p/ecm/wc2000/0421.html}

\bibitem[{Granger et~al(2004)Granger, Maasoumi, and Racine}]{CGMR}
Granger CW, Maasoumi E, Racine J (2004) A dependence metric for possibly
  nonlinear processes. Journal of Time Series Analysis 25(5):649--669

\bibitem[{Horibe(1973)}]{Horibe}
Horibe Y (1973) A note on entropy metrics. Information and Control
  22(4):403--404

\bibitem[{Jaynes(1957)}]{JaynesMaxEnt}
Jaynes ET (1957) Information theory and statistical mechanics. Phys Rev
  106:620--630, \doi{10.1103/PhysRev.106.620},
  \urlprefix\url{http://link.aps.org/doi/10.1103/PhysRev.106.620}

\bibitem[{Jenssen et~al(2006)Jenssen, Eltoft, Erdogmus, and Principe}]{Jenssen}
Jenssen R, Eltoft T, Erdogmus D, Principe J (2006) Some equivalences between
  kernel methods and information theoretic methods. Journal of VLSI signal
  processing systems for signal, image and video technology 45(1-2):49--65,
  \doi{10.1007/s11265-006-9771-8}

\bibitem[{Kong and Ding(2014)}]{Kong14}
Kong D, Ding C (2014) Pairwise-covariance linear discriminant analysis.
  \urlprefix\url{http://www.aaai.org/ocs/index.php/AAAI/AAAI14/paper/view/8482}

\bibitem[{Kraskov and Grassberger(2009)}]{Kraskov:2008}
Kraskov A, Grassberger P (2009) Mic: Mutual information based hierarchical
  clustering. In: Information Theory and Statistical Learning, Springer U.S.,
  chap~5, pp 101--123

\bibitem[{Kudlicki et~al(2007)Kudlicki, Rowicka, and Otwinowski}]{SPECTRANS}
Kudlicki A, Rowicka M, Otwinowski Z (2007) {SCEPTRANS:} an online tool for
  analyzing periodic transcription in yeast. Bioinformatics 23(12):1559--1561,
  \doi{10.1093/bioinformatics/btm126},
  \urlprefix\url{http://dx.doi.org/10.1093/bioinformatics/btm126}

\bibitem[{Kv{\aa}lseth(1987)}]{Kvalseth}
Kv{\aa}lseth TO (1987) Entropy and correlation: Some comments. Systems, Man and
  Cybernetics, IEEE Transactions on 17(3):517--519

\bibitem[{Levy(1939)}]{Levy}
Levy P (1939) Sur la division d'un segment par des points choisis au hasard. CR
  Acad Sci Paris 208:147--149

\bibitem[{Lugosi and Nobel(1996)}]{Lugosi}
Lugosi G, Nobel A (1996) Consistency of data-driven histogram methods for
  density estimation and classification. The Annals of Statistics
  24(2):687--706

\bibitem[{Mandal and Murthy(1997)}]{alpha:hull}
Mandal DP, Murthy CA (1997) Selection of alpha for alpha-hull in \{R2\}.
  Pattern Recognition 30(10):1759 -- 1767

\bibitem[{Noru{\v{s}}is and Inc(1982)}]{Norusis}
Noru{\v{s}}is MJ, Inc S (1982) SPSS introductory guide: basic statistics and
  operations. McGraw-Hill

\bibitem[{P\'{a}l et~al(2010)P\'{a}l, P\'{o}czos, and Szepesv\'{a}ri}]{poczos}
P\'{a}l D, P\'{o}czos B, Szepesv\'{a}ri C (2010) Estimation of renyi entropy
  and mutual information based on generalized nearest-neighbor graphs. In:
  Lafferty JD, Williams CKI, Shawe-Taylor J, Zemel RS, Culotta A (eds) NIPS,
  Curran Associates, Inc., pp 1849--1857

\bibitem[{Parzen(1962)}]{parzen}
Parzen E (1962) On estimation of a probability density function and mode. The
  Annals of Mathematical Statistics 33(3):1065--1076

\bibitem[{Penrose(1999)}]{penrose:strongLaw}
Penrose M (1999) A strong law for the longest edge of minimal spanning tree.
  The Annals of Applied Probability 27(1):246--260

\bibitem[{Pompe(1998)}]{pompe:1998}
Pompe B (1998) Ranking and entropy estimation in nonlinear time series
  analysis. In: Nonlinear Analysis of Physiological Data, Springer Berlin
  Heidelberg, pp 67--90

\bibitem[{Reshef et~al(2011)Reshef, Reshef, Finucane, Grossman, McVean,
  Turnbaugh, Lander, Mitzenmacher, and Sabeti}]{reshef}
Reshef DN, Reshef YA, Finucane HK, Grossman SR, McVean G, Turnbaugh PJ, Lander
  ES, Mitzenmacher M, Sabeti PC (2011) Detecting novel associations in large
  data sets. Science 16:1518--1524

\bibitem[{Sahami(1996)}]{Sahami96learninglimited}
Sahami M (1996) Learning limited dependence bayesian classifiers. In: In
  KDD-96: Proceedings of the Second International Conference on Knowledge
  Discovery and Data Mining, AAAI Press, pp 335--338

\bibitem[{Simon and Tibshirani(2012)}]{Tibshirani}
Simon N, Tibshirani R (2012) Comment on ``detecting novel associations in large
  data sets'' by reshef et al.,science, dec. 16 2011.
  \urlprefix\url{http://www-stat.stanford.edu/\~{}tibs/reshef/}

\bibitem[{Slud(1978)}]{Slud}
Slud E (1978) Entropy and maximal spacings for random partitions. Probability
  Theory and Related Fields 41(4):341--352

\bibitem[{Spellman et~al(1998)Spellman, Sherlock, Zhang, Iyer, Anders, Eisen,
  Brown, Botstein, and Futcher}]{Spellman}
Spellman PT, Sherlock G, Zhang MQ, Iyer VR, Anders K, Eisen MB, Brown PO,
  Botstein D, Futcher B (1998) Comprehensive identification of cell
  cycle-regulated genes of the yeast saccharomyces cerevisiae by microarray
  hybridization. Mol Biol Cell 9(12):3273--3297

\bibitem[{Sz{\'e}Kely and Rizzo(2009)}]{SRpaper}
Sz{\'e}Kely GJ, Rizzo ML (2009) Brownian distance covariance. Annals of Applied
  Statistics 3(4):1236--1265

\bibitem[{Tambakis(2000)}]{Tambakis}
Tambakis DN (2000) On the informational content of asset prices. Tech. Rep.
  101, Society for Computational Economics,
  \urlprefix\url{http://ideas.repec.org/p/sce/scecf0/101.html}

\bibitem[{Urbach(2000)}]{Urbach:2000}
Urbach R (2000) Footprints of chaos in the markets: analyzing non-linear time
  series in financial markets and other real systems. Prentice Hall, London

\bibitem[{Vapnik and Chervonenkis(1971)}]{VCpaper}
Vapnik VN, Chervonenkis AY (1971) On the uniform convergence of relative
  frequencies of events to their probabilities. Theory of Probability and its
  Applications 16(2):264--280

\bibitem[{Wang and Dong(2009)}]{XiZhaoMaxEnt}
Wang XZ, Dong CR (2009) Improving generalization of fuzzy if--then rules by
  maximizing fuzzy entropy. Fuzzy Systems, IEEE Transactions on 17(3):556--567,
  \doi{10.1109/TFUZZ.2008.924342}

\bibitem[{Wang et~al(2012)Wang, Dong, and Yan}]{XiZhaoMaxAmbi}
Wang XZ, Dong LC, Yan JH (2012) Maximum ambiguity-based sample selection in
  fuzzy decision tree induction. Knowledge and Data Engineering, IEEE
  Transactions on 24(8):1491--1505, \doi{10.1109/TKDE.2011.67}

\bibitem[{Wang et~al(2014)Wang, Xing, Li, Hua, Dong, and
  Pedrycz}]{XiZhaoGenFuzzy}
Wang XZ, Xing HJ, Li Y, Hua Q, Dong CR, Pedrycz W (2014) A study on
  relationship between generalization abilities and fuzziness of base
  classifiers in ensemble learning. Fuzzy Systems, IEEE Transactions on
  PP(99):1--1, \doi{10.1109/TFUZZ.2014.2371479}

\bibitem[{Yao(2003)}]{Yao}
Yao YY (2003) Information-theoretic measures for knowledge discovery and data
  mining. In: Entropy Measures, Maximum Entropy Principle and Emerging
  Applications, Studies in Fuzziness and Soft Computing, vol 119, Springer
  Berlin Heidelberg, pp 115--136

\end{thebibliography}
\bigskip
\appendix
\noindent{\bf Appendix}

\section{Theorems and Proofs}
\label{thpf}

\subsection{$L-1 consistency$ of density estimates}
Let ${\mathbf{R}}^d$ denote d-dimensional Euclidean space. An ordered sequence $x_1, . . . , x_n \in {\mathbf{R}}^d$ will be denoted by $x_1^n$. By a partition of ${\mathbf{R}}^d$ we mean a finite collection $\pi = \{A_1, \cdots A_r\}$ of Borel-measurable subsets of ${\mathbf{R}}^d$, referred to as cells, with the property that (i) $\bigcup^r_{j=1}A_j ={\mathbf{R}}^d$ and (ii) $A_i \cap A_j = \emptyset $, if $i \not= j$. Let $|\pi|$ denote the number of cells in $\pi$.
Let $\mathcal{A}$ be a (possibly infinite) family of partitions of ${\mathbf{R}}^d$. The maximal cell count of $\mathcal{A}$ is given by
$m(\mathcal{A}) = \sup_{\pi \in \mathcal{A}} |\pi|$.
The complexity of $\mathcal{A}$ will be measured by a combinatorial quantity similar to the growth function for classes of sets that was proposed by \cite{VCpaper}. Fix $n$ points $x_1,\cdots , x_n \in {\mathbf{R}}^d$  and let $B = \{x_1,\cdots , x_n\}$. Let $\Delta(\mathcal{A}, x_1^n)$ be the number of distinct partitions $\{A_1 \cap B, \cdots ,A_r \cap B\}$ of the finite set $B$ that are induced by a partition $\{A_1,\cdots ,A_r\} \in \cal{A}$. It is easy to see that $\Delta(\mathcal{A}, x_1^n) \leq  m({\mathcal{A}})^n$. The growth function of $\mathcal{A}$ is defined as $\Delta_n^\ast(\mathcal{A}) = \max_{x_1^n \in \mathbf{R}^{n.d} } \Delta(\mathcal{A},x_1^n)$ is the largest number of distinct partitions of any $n$ point subset of ${\mathbf{R}}^d$  that can be induced by the partitions in $\mathcal{A}$. In other words, it is the maximum number of ways in which any set of n fixed points can be partitioned.

The density estimate is produced in two stages from a training set $T_n$ that consists
of $n$ i.i.d. random variables $Z_1,\cdots,Z_n$ taking values in a set $\mathcal{X}={\mathbf{R}}^d$. Using $T_n$ a partition $\pi_n = \pi_n(Z_1,\cdots,Z_n)$ is produced according to a prescribed rule. The partition $\pi_n$ is then used in conjunction with $T_n$ to produce a density estimate. An n-sample partitioning rule for $\mathbf{R}^d$ is a function $\pi_n$ that associates every n-tuple $(z_1,\cdots z_n) \in \mathcal{X}^n$ with a measurable partition of $\mathbf{R}^d$. Applying the rule $\pi_n$ to $Z_1,\cdots ,Z_n$ produces a random partition $\pi_n(Z_1, \cdots ,Z_n)$. A partitioning scheme for $\mathbf{R}^d$ is a sequence of partitioning rules $\Pi = {\pi_1, \pi_2, \cdots}$. Associated with every rule $\pi_n$ there is a fixed, non-random family of partitions $\mathcal{A}_n = {\pi_n(Z_1, . . . , Z_n) : Z_1, . . . , Z_n \in \mathcal{X}}$. Thus every partitioning scheme $\Pi$ is associated with a sequence ${\mathcal{A}_1,\mathcal{A}_2, \cdots}$ of partition families. In what follows the random partitions $\pi_n(Z_1, . . . ,Z_n)$ will be denoted simply by $\pi_n$. With this convention in mind, for every $x \in \mathbf{R}^d$ let $\pi_n[x]$ be the unique cell of $\pi_n$ that contains the point $x$. Let $A$ be any subset of $\mathbf{R}^d$. The diameter of $A$ is the maximum Euclidean distance between two points of $A$, $diam(A) = \sup_{x,y\in A} ||x - y||$.
Let $\mu$ be a probability measure on $\mathbf{R}^d$ having density $f$, so that $\mu(A)=\int_A f(x)dx$ for every Borel subset A of $\mathbf{R}^d$. Let $X_1,X_2, . . .$ be i.i.d. random vectors in $\mathbf{R}^d$, each distributed according to $\mu$, and let $\mu_n$ be the empirical distribution of $X_1, \cdots ,X_n$. Fix a partitioning scheme $\Pi = {\pi_1, \pi_2, \cdots}$ for $\mathbf{R}^d$. Applying the $n^{th}$ rule in $\Pi$ to $X_1, \cdots ,X_n$ produces a partition $\pi_n = \pi_n(X^n_1)$ of $\mathbf{R}^d$. The partition $\pi_n$, in turn, gives rise to a natural histogram estimate of $f$ as follows. For each vector $x \in \mathbf{R}^d$ let
\begin{align}
f_n(x) &= \mu_n(\pi_n[x])/\lambda(\pi_n[x]) \textit{ if } \lambda(\pi_n[x]) < \infty\\
       &=0 \textit{ otherwise}
\end{align}
Here $\lambda$ denotes the Lebesgue measure on $\mathbf{R}^d$. Note that $f_n$ is itself a function of the
training set $X_1,\cdots X_n$, and that $f_n$ is piecewise constant on the cells of $\pi_n$. The sequence
of estimates ${f_n}$ is said to be $strongly$ $L_1-consistent$ if $\int|f(x) - f_n(x)|dx \to 0$ with probability one as $n \to \infty$. A theorem, given by \cite{Lugosi} is stated below using the notations mentioned above.
\begin{theorem}
\label{L1cons}
Let $X1,X2,\cdots$ be i.i.d. random vectors in ${\mathbf{R}}^d$ whose common distribution $\mu$ has a density $f$. Let $ \Pi = \{\pi_1, \pi_2, \cdots\}$ be a fixed partitioning scheme for ${\mathbf{R}}^d$, and let $\mathcal{A}_n$ be the collection of partitions associated with the rule $\pi_n$. As $n$ tends to infinity,
\begin{align}
n^{-1}m(A_n) \to 0\\
n^{-1} \log \Delta_n^\ast (A_n) \to 0\\
\mu\{x : diam(\pi_n[x]) > \gamma \} \to 0
\end{align}
with probability one for every $\gamma >0$, then the density estimates $f_{n}$ are strongly consistent in L1:
\begin{align}
\int|f(x)-f_{n}(x)|dx \to 0
\end{align}
,with probability one.
\qed
\end{theorem}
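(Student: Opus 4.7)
The plan is to prove strong $L_1$-consistency via the classical bias--variance decomposition for data-dependent partition-based histograms, controlling the stochastic part by a Vapnik--Chervonenkis style uniform deviation inequality and the approximation part by the shrinking-cells hypothesis. I would introduce the ``oracle'' piecewise constant approximation $\bar f_n(x)=\mu(\pi_n[x])/\lambda(\pi_n[x])$ (set to $0$ on cells of infinite Lebesgue measure), and split the target via the triangle inequality into
\begin{align*}
\int|f-f_n|\,d\lambda \;\le\; \int|f-\bar f_n|\,d\lambda \;+\; \int|\bar f_n - f_n|\,d\lambda.
\end{align*}
A short calculation collapses the second integral to a partition-indexed total-variation quantity,
\begin{align*}
\int|\bar f_n(x) - f_n(x)|\,dx \;=\; \sum_{A\in\pi_n}|\mu(A)-\mu_n(A)|,
\end{align*}
so the two resulting integrals become the bias and variance contributions, to be driven to zero separately.

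For the bias term, I would approximate $f$ in $L_1(\lambda)$ by a uniformly continuous, compactly supported $g$ within $\varepsilon$, then use the shrinking-cells hypothesis $\mu\{x:\mathrm{diam}(\pi_n[x])>\gamma\}\to 0$ together with uniform continuity of $g$ on its support to show $\bar g_n \to g$ in $L_1$ almost surely (essentially Lebesgue's differentiation theorem applied along the random partitions). Transferring back to $f$ uses the fact that $f\mapsto \bar f_n$ is a conditional expectation and hence an $L_1$-contraction: $\|\bar f_n - \bar g_n\|_1 \le \|f-g\|_1$. Assembling, the approximation error is at most $2\varepsilon + \|\bar g_n - g\|_1$, which is eventually below $3\varepsilon$ a.s.; letting $\varepsilon\to 0$ handles the bias.

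The heart of the argument is the stochastic term, for which I would establish a uniform empirical deviation bound of the form
\begin{align*}
P\!\left(\sup_{\pi\in\mathcal{A}_n}\sum_{A\in\pi}|\mu_n(A)-\mu(A)|>\varepsilon\right) \;\le\; c_1\,\Delta^\ast_{2n}(\mathcal{A}_n)\,2^{m(\mathcal{A}_n)}\,\exp(-c_2\,n\,\varepsilon^2),
\end{align*}
via four sub-steps: (i) ghost-sample symmetrization, replacing $\mu$ by an independent empirical measure $\mu'_n$ and paying a factor of $2$; (ii) Rademacher sign symmetrization on the $(X_i,X'_i)$ pairs, valid because the partition-induced counts are exchangeable; (iii) conditioning on the pooled $2n$-point sample, after which only $\Delta^\ast_{2n}(\mathcal{A}_n)$ distinct partitions of these points matter and the outer supremum is replaced by a finite maximum; and (iv) linearizing $\sum_A|x_A|=\max_{\sigma\in\{\pm1\}^{|\pi|}}\sum_A\sigma_A x_A$, which costs a factor of $2^{m(\mathcal{A}_n)}$ but reduces the problem, for each fixed partition and sign pattern, to a sum of independent bounded variables controlled by Hoeffding's inequality. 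Taking logarithms, the hypotheses $n^{-1}\log\Delta^\ast_n(\mathcal{A}_n)\to 0$ and $n^{-1}m(\mathcal{A}_n)\to 0$ make this bound summable in $n$ for every fixed $\varepsilon>0$, so Borel--Cantelli yields almost sure convergence of the variance term to zero; together with the bias estimate, this closes the proof.

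The main obstacle will be step (iv) of the uniform bound. Unlike classical VC theory, which bounds $|\mu_n(A)-\mu(A)|$ for a single set, here the quantity of interest is a sum over \emph{all} cells of a partition, so naive union bounds over cells inflate $\varepsilon$ into $\varepsilon/m(\mathcal{A}_n)$ and become useless. The sign-linearization trick is what reduces the partition-sum to a single Rademacher average, and it is precisely this device that forces the exponential $2^{m(\mathcal{A}_n)}$ factor; the sub-linear cell-count hypothesis of Theorem~\ref{L1cons} is therefore essential, not decorative. A second subtlety is that symmetrization must be performed \emph{uniformly} over the fixed family $\mathcal{A}_n$ rather than pointwise at the random $\pi_n$, because $\pi_n$ itself depends on the sample and so standard i.i.d.\ concentration does not apply directly to a single partition.
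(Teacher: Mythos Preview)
The paper does not actually prove Theorem~\ref{L1cons}: it is quoted verbatim from \cite{Lugosi} in the appendix as a black-box result (note the bare \texttt{\textbackslash qed} with no argument following the statement), and is then invoked to certify that the specific MIDI partitioning scheme yields consistent density estimates. So there is no ``paper's own proof'' to compare against.

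That said, your sketch is a faithful outline of the original Lugosi--Nobel argument. The decomposition into bias $\int|f-\bar f_n|$ and variance $\sum_{A\in\pi_n}|\mu(A)-\mu_n(A)|$, the approximation of $f$ by a compactly supported continuous $g$ combined with the shrinking-cells hypothesis for the bias, and---crucially---the symmetrization $+$ sign-linearization route to the uniform deviation bound
\[
P\!\left(\sup_{\pi\in\mathcal{A}_n}\sum_{A\in\pi}|\mu_n(A)-\mu(A)|>\varepsilon\right)\le c_1\,\Delta^\ast_n(\mathcal{A}_n)\,2^{m(\mathcal{A}_n)}\,e^{-c_2 n\varepsilon^2}
\]
are exactly the ingredients of the original proof. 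Your identification of the $2^{m(\mathcal{A}_n)}$ factor as the price of linearizing the absolute-value sum, and your remark that this is what makes the sub-linear cell-count hypothesis essential rather than cosmetic, are both correct and show you understand where the conditions are used. One minor point: in the bias step, the $L_1$-contraction property you invoke for $f\mapsto\bar f_n$ holds cellwise on cells of finite Lebesgue measure, but you should also check that the $\mu$-mass of cells with $\lambda(A)=\infty$ (where both $f_n$ and $\bar f_n$ are set to zero) is asymptotically negligible; this follows from the shrinking-cells condition since any bounded-diameter cell has finite Lebesgue measure.
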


\subsection{Shrinking cells}
Let $\{U_k\}$, $1\le k\le N$ to  be uniform i.i,d, in [0, 1], ordered to give $\{U_{N,k}^{\ast}\}$ We write $U_{N,0}^{\ast} \equiv 0$ , $U_{N,N+1}^{\ast} \equiv 1$. Then $\gamma_j = U_{N,j+1}^{\ast}-U_{N,j}^{\ast} $ for $0<j<_N$ are called uniform spacings. Let $M_n = \max \gamma_j$ be the largest uniform spacing. Paul Levy gave a heuristic argument to show that
\begin{align}
\lim Pr\{M_n<(\log n+a)/n\}=exp(-exp(-a))\\
\text{,where } -\infty<a<\infty
\end{align}
Further \cite{Darling} proved following theorem
\begin{theorem}
\label{Darlingthm}
Almost surely as $n \to \infty$,
\begin{align}
| M_n-\log(n)/n | = O(\log (\log (n))/n)
\end{align}
It has been specified that this implies for a constant C as $n \to \infty$,
\begin{align}
| M_n-\log (n)/n | &= C(\log (\log (n))/n)
\end{align}
\qed
\end{theorem}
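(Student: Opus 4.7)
The plan is to reduce $M_n$ to the maximum $T_n$ of $n+1$ i.i.d.\ Exp$(1)$ variables via the classical representation of uniform spacings, and then deduce the desired almost-sure oscillation bound from Borel--Cantelli. Concretely, if $E_0,\dots,E_n$ are i.i.d.\ Exp$(1)$ and $S_n=\sum_{j=0}^n E_j$, the vector $(E_0,\dots,E_n)/S_n$ has the same joint law as $(\gamma_0,\dots,\gamma_n)$, so $M_n \stackrel{d}{=} T_n/S_n$ with $T_n=\max_{0\le j\le n}E_j$. By the SLLN (in fact by the LIL for sharper control), $S_n = n + O(\sqrt{n\log\log n})$ almost surely, hence $n/S_n = 1 + O(\sqrt{\log\log n\,/\,n})$ almost surely.

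Next I would control $T_n$ on both sides. For the upper tail, the union bound gives
\[\Pr\bigl(T_n > \log n + c\log\log n\bigr) \le (n+1)e^{-\log n}(\log n)^{-c} = O\bigl((\log n)^{-c}\bigr).\]
Taking $c>1$ and restricting to the geometric subsequence $n_k = 2^k$, these probabilities are summable, so Borel--Cantelli gives $T_{n_k}\le \log n_k + c\log\log n_k$ eventually almost surely. Monotonicity of $T_n$ together with $\log n_{k+1} = \log n_k + \log 2$ and $\log\log n_{k+1} = \log\log n_k + o(1)$ then extends the bound to all $n$. For the lower tail,
\[\Pr\bigl(T_n < \log n - c\log\log n\bigr) = \bigl(1 - (\log n)^c/n\bigr)^{n+1} \le \exp\bigl(-(\log n)^c(1+o(1))\bigr),\]
which for any $c>1$ is summable directly in $n$ (e.g.\ at $c=2$ it is $O(1/n^2)$). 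A single application of Borel--Cantelli therefore yields $T_n \ge \log n - O(\log\log n)$ almost surely, so altogether $T_n = \log n + O(\log\log n)$ almost surely.

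Finally I would assemble the three estimates:
\[nM_n = T_n\cdot\frac{n}{S_n} = \bigl(\log n + O(\log\log n)\bigr)\bigl(1 + O(\sqrt{\log\log n\,/\,n})\bigr) = \log n + O(\log\log n),\]
because the cross term $(\log n)\sqrt{\log\log n\,/\,n} = o(1)$ is much smaller than $\log\log n$. Dividing by $n$ yields $|M_n - \log n/n| = O(\log\log n\,/\,n)$ almost surely, which is exactly the claim. The hard part is not any single estimate but the bookkeeping: passing from the subsequence bound on $T_{n_k}$ back to all $n$ without losing the $\log\log n$ scale, and making sure the multiplicative fluctuation of $n/S_n$ is absorbed into the $O(\log\log n)$ budget rather than enlarging it. Both issues are handled cleanly by the rate estimates above, and they are the only places where one must argue carefully.
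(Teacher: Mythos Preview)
The paper does not supply a proof of this statement: Theorem~\ref{Darlingthm} is quoted from Darling in the appendix without argument, so there is nothing on the paper's side to compare your proposal against; the question is simply whether your sketch stands on its own.

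Your tail estimates for $T_n$ and $S_n$ are correct, but there is a real gap at the assembly step ``$nM_n = T_n\cdot n/S_n$''. The identity $M_n \stackrel{d}{=} T_n/S_{n+1}$ holds only for each \emph{fixed} $n$; the \emph{sequence} $(T_n/S_{n+1})_{n\ge1}$ built from one i.i.d.\ exponential sequence does not share the joint law of $(M_n)_{n\ge1}$ built by adding i.i.d.\ uniforms one at a time. (In the uniform model $M_n$ is non-increasing, since each new point splits an existing gap, so $P(M_2>M_1)=0$; but $P(T_2/S_3>T_1/S_2)>0$, e.g.\ when $E_3\gg E_1,E_2$.) Your almost-sure bound on $T_n$ uses monotonicity of $T_n$ on the exponential space, and combining it with the LIL for $S_n$ yields an a.s.\ statement about $T_n/S_{n+1}$, not about $M_n$. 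To repair this, use the exponential representation only to compute the \emph{marginal} probabilities $P(|nM_n-\log n|>C\log\log n)$ and then run Borel--Cantelli and the subsequence/monotonicity argument on $M_n$ itself. The lower tail is summable in $n$ outright. For the upper tail the bound $(\log n)^{-C}$ forces a subsequence, but your choice $n_k=2^k$ no longer works: monotonicity of $M_n$ gives only $nM_n\le (n_{k+1}/n_k)\,n_kM_{n_k}\le 2\log n_k+O(\log\log n_k)$, hence $nM_n-\log n=O(\log n)$, which is too weak. You need a subsequence satisfying both $(n_{k+1}/n_k-1)\log n_k=O(\log\log n_k)$ and $\sum_k(\log n_k)^{-C}<\infty$; for instance $n_k=\lfloor e^{\sqrt{k}}\rfloor$ gives $(n_{k+1}/n_k-1)\log n_k=O(1)$ and $\sum_k(\log n_k)^{-C}\sim\sum_k k^{-C/2}<\infty$ for $C>2$.
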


Following theorems establish the upper limit for maximal spacing.
\begin{theorem}
\label{dev1}
Let $X_1,X_2,\cdots$ be an i.i.d. sequence of random variables with a continuous distribution function $F$. Let, $X_{1,n} < X_{2,n} < \cdots <X_{n,n}$ denote the order statistics of $X_{1}, X_{2}, \cdots, X_{n}$, and let $S_i^{(n)}=X_{i+1,n}-X_{i,n}$, $i=1 ,2\cdots,n-1$ define the corresponding spacings. Denote the order statistics of $S_1^{(n)}, \cdots, S_{n-1}^{(n)}$ by $M_{n-1}^{(n)}< \cdots <M_{2}^{(n)}<M_{1}^{(n)}$. If the following conditions are satisfied
\begin{enumerate}
    \item{}$F(x)=P\{X\le x\}$ has a continuous first derivative $f(x)>0$ on $(A,B)$ where $A=inf\{x;F(x)>0\}<B=\{x;F(x)<1\}$.
    \item{}The distribution $F$ has a bounded support $(-\infty<A<B<+\infty)$, and there exists an $x_0 \in (A,B)$ such that, for all $x \in (A,B)$, $x \ne x_0$, $f(x)>f(x_0)>0$.
    \item{}There exists an $r$, $0<r<\infty$, such that
    \begin{align}
    \lim {\inf_{h \downarrow 0} {\frac{f(x_0+h)-f(x_0)}{\lvert h \rvert^r}=d_r}} \text{, }0<d_r<+\infty
    \end{align}
\end{enumerate}
Then for any $p \ge 5$, $k \ge 1$ and $\epsilon>0$,
    \begin{align}
    P(nM_k^{(n)}f(x_0)> \log(n)-(1/r)\log_2(n)\\
    +1/k(2 \log_2(n)+\log_3(n)+\cdots
    +\log_{p-1}(n)\\+(1+\epsilon)\log_p(n))i.o.)=0
    \end{align}
\qed
\end{theorem}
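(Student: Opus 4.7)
\medskip
\noindent\textbf{Proof proposal for Theorem \ref{dev1}.}
The plan is to prove the asserted almost-sure upper bound on $nM_k^{(n)}f(x_0)$ by a first Borel--Cantelli argument, after reducing the problem to uniform spacings and localising the extremal spacings near the unique minimum $x_0$ of $f$. First I would apply the quantile transform $U_i=F(X_i)$, so that $U_1,U_2,\dots$ are i.i.d.\ uniform on $[0,1]$ with spacings $T_i^{(n)}=U_{i+1,n}-U_{i,n}$, and observe that if $X_{i,n}$ and $X_{i+1,n}$ both lie in a small neighbourhood $J_\delta=(x_0-\delta,x_0+\delta)\cap(A,B)$, then the mean value theorem gives $S_i^{(n)}=T_i^{(n)}/f(\xi_i)$ for some $\xi_i\in J_\delta$, and condition~(2) together with~(3) forces $f(\xi_i)\ge f(x_0)$ with $f(\xi_i)-f(x_0)=O(\delta^r)$. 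Thus, on $J_\delta$, the spacings of the $X$'s are bounded by uniform spacings divided by $f(x_0)$, and the problem reduces to a sharp upper-tail bound for the $k$-th largest uniform spacing.

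The next step is to use the classical Rényi-type representation of uniform spacings as normalised exponential variables: letting $E_1,\dots,E_{n+1}$ be i.i.d.\ exponential with rate $1$ and $\Sigma_n=\sum_{j=1}^{n+1}E_j$, one has $(T_1^{(n)},\dots,T_n^{(n)})\stackrel{d}{=}(E_1,\dots,E_n)/\Sigma_n$. Combined with the strong law for $\Sigma_n/n\to1$, this reduces the tail of $nT_{(k)}^{(n)}$ (the $k$-th largest uniform spacing) to the tail of the $k$-th largest among $n$ i.i.d.\ exponentials, which has an explicit formula. A standard calculation then yields, for any $u_n\to\infty$,
\begin{align}
P\bigl(nT_{(k)}^{(n)}>u_n\bigr)\;\le\;\frac{n^{k}e^{-ku_n}}{k!}\bigl(1+o(1)\bigr).
\end{align}
Plugging $u_n=L_n/f(x_0)$, where $L_n$ is the threshold in the statement of the theorem, a direct computation of $n^k e^{-kL_n}$ collapses the $\log,\log_2,\dots,\log_{p-1}$ corrections and leaves a residual factor of order $(\log_p n)^{-(1+\epsilon)}$. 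This tail bound is not summable in $n$, so I would apply it along the subsequence $n_m=\lfloor\theta^m\rfloor$ for some $\theta>1$ chosen close to $1$, for which $\sum_m(\log_p n_m)^{-(1+\epsilon)}<\infty$, and invoke Borel--Cantelli to conclude the bound at times $n_m$.

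To pass from the subsequence to all $n$, I would use the fact that $nM_k^{(n)}$ cannot grow too fast between two consecutive $n_m$'s: $M_k^{(n+1)}\le M_k^{(n)}$ up to the effect of one new point splitting a spacing, so by making $\theta\downarrow1$ the ratio $L_{n_{m+1}}/L_{n_m}$ can be forced arbitrarily close to $1$, absorbing the $(1+\epsilon)$ slack. Finally, I need to rule out the contribution of spacings that straddle or lie outside $J_\delta$: here condition~(2) plus $f(x)>f(x_0)$ uniformly away from $x_0$ shows that, with probability one, no spacing outside $J_\delta$ can reach the threshold $L_n/(nf(x_0))$ for large $n$, since the local density is strictly larger than $f(x_0)+\eta$ for some $\eta>0$ and the same exponential tail bound applied with $f(x_0)+\eta$ in place of $f(x_0)$ yields an actually summable tail.

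The main obstacle I anticipate is the bookkeeping in the second step: matching the full nested sequence of iterated logarithms $\log_2 n,\log_3 n,\dots,\log_{p-1} n,(1+\epsilon)\log_p n$ to the $1/(k!)$ and $n^k$ factors in the tail bound, so that the residual summable factor $(\log_p n)^{-(1+\epsilon)}$ appears cleanly. This requires iterating the identity $\log(n^k e^{-kL_n})$ and keeping careful track of which term is absorbed by which correction; everything else (quantile transform, localisation near $x_0$, subsequence extraction) is routine once this algebraic accounting is in place.
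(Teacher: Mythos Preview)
The paper does not prove this theorem at all. Theorem~\ref{dev1} is stated in the appendix as a result of Deheuvels, cited without proof (the \texttt{\textbackslash qed} appears immediately after the statement), and is used only as an input to Lemma~\ref{shrinkCell}. So there is no proof in the paper to compare your proposal against.

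That said, your outline is broadly the right strategy for Deheuvels' original argument --- quantile transform, exponential representation of uniform spacings, Borel--Cantelli along a geometric subsequence, and localisation near the minimiser $x_0$ --- but one step is too coarse. You reduce via $S_i^{(n)}=T_i^{(n)}/f(\xi_i)$ with $f(\xi_i)\ge f(x_0)$ to the inequality $nM_k^{(n)}f(x_0)\le nT_{(k)}^{(n)}$ and then invoke the classical tail bound for the $k$-th largest \emph{uniform} spacing. That comparison loses the $-(1/r)\log_2 n$ term in the threshold: the uniform-spacing bound alone gives only $\log n + (1/k)(2\log_2 n+\cdots)$, which is a strictly weaker upper envelope than the one asserted. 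The $-(1/r)\log_2 n$ correction is produced precisely by condition~(3): near $x_0$ one has $f(x)\approx f(x_0)+d_r|x-x_0|^r$, so only the $O\bigl((\log n/n)^{1/r}\bigr)$-neighbourhood of $x_0$ contributes spacings that can compete for the maximum at the scale $\log n/(nf(x_0))$, and the effective number of candidate spacings is not $n$ but of order $n\cdot(\log n/n)^{1/r}$. Carrying this count through the exponential tail calculation is what yields the extra $-(1/r)\log_2 n$. Without that refinement your Borel--Cantelli sum will not converge at the sharper threshold stated in the theorem.
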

Under the above mentioned conditions \cite{Deheuvels} also states the following

\begin{theorem}
\label{dev2}
In addition to conditions for theorem \ref{dev1} if the following condition is satisfied
\begin{enumerate}
    \item{}There exists an $r$, $0<r<\infty$, such that
    \begin{align}
    \lim {\sup_{h \downarrow 0} {\frac{f(x_0+h)-f(x_0)}{\lvert h \rvert^r}=D_r}} \text{, }0<D_r<+\infty
    \end{align}
\end{enumerate}
and $0<d_r\leq D_r< +\infty$
Then, we have, almost surely,
\begin{align}
\lim{\sup_{n\to\infty}{\frac{nM_k^{(n)}f(x_0)-\log(n)}{\log_2(n)}}}=\frac{2}{k}-\frac{1}{r}
\end{align}
\qed
\end{theorem}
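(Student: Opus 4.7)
The plan is to split the identity into matching upper and lower bounds, $\limsup_n (nM_k^{(n)}f(x_0) - \log n)/\log_2 n \le 2/k - 1/r$ and $\ge 2/k - 1/r$, both almost surely, and combine. The upper bound is essentially already packaged in Theorem \ref{dev1}; the lower bound requires a fresh second Borel--Cantelli argument that genuinely exploits the new finiteness condition on $D_r$, which is why Theorem \ref{dev2} assumes more than Theorem \ref{dev1}.

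For the upper bound, apply Theorem \ref{dev1} with $p = 5$ and any fixed $\epsilon > 0$: with probability one, only finitely many $n$ satisfy
\[
nM_k^{(n)}f(x_0) > \log n - \tfrac{1}{r}\log_2 n + \tfrac{1}{k}\bigl(2\log_2 n + \log_3 n + \log_4 n + (1+\epsilon)\log_5 n\bigr).
\]
Dividing through by $\log_2 n$ and using the elementary bound $\log_j n / \log_2 n \to 0$ for $j \ge 3$ yields $\limsup \le 2/k - 1/r$ on an $\epsilon$-dependent almost-sure event. Intersecting over a rational sequence $\epsilon_m \downarrow 0$ preserves probability one and delivers the upper bound cleanly.

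For the lower bound I would localize the analysis near $x_0$. Pick $\delta_n \downarrow 0$ with $n\delta_n\to\infty$ and $\delta_n^r = o(1/\log_2 n)$; in the window $I_n = [x_0 - \delta_n, x_0 + \delta_n]$, conditions (1)--(3) of Theorem \ref{dev1} combined with the new upper envelope $D_r$ force the density to satisfy $f(x) = f(x_0)(1 + O(\delta_n^r))$, so conditionally on the count $N_n$ of sample points in $I_n$, these points are close to uniform on $I_n$ in total variation. Classical tail estimates for extreme spacings of uniforms (e.g.\ Weiss--Pyke, Deheuvels) then furnish, for any fixed $\epsilon>0$, a lower bound of the form
\[
p_n := P\bigl(nM_k^{(n)}f(x_0) - \log n > (2/k - 1/r - \epsilon)\log_2 n\bigr) \gtrsim (\log_2 n)^{-C}(\log n)^{-1+k\epsilon},
\]
so that $\sum_j p_{n_j}=\infty$ along a geometric subsequence $n_j = \lfloor \theta^j\rfloor$.

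The main obstacle is the dependence among the candidate events $E_j = \{nM_k^{(n_j)}f(x_0) - \log n_j > (2/k - 1/r - \epsilon)\log_2 n_j\}$, each of which depends on the entire sample up to time $n_j$, so the second Borel--Cantelli lemma cannot be applied directly. I would handle this by constructing surrogate events $\tilde E_j$ based only on the new increments $\{X_{n_{j-1}+1},\ldots,X_{n_j}\}$ restricted to the annular region $I_{n_j}\setminus I_{n_{j-1}}$; by construction these are independent across $j$, and the two-sided density control jointly provided by $d_r$ and $D_r$ ensures $P(\tilde E_j) \ge (1-o(1))p_{n_j}$ with $\tilde E_j \subseteq E_j$ up to a summable exceptional probability. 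The classical second Borel--Cantelli lemma then gives $P(E_j\text{ i.o.}) = 1$, hence $\limsup \ge 2/k - 1/r - \epsilon$ a.s., and intersecting over $\epsilon_m\downarrow 0$ finishes the proof. The truly delicate step will be verifying the near-equality $P(\tilde E_j) \ge (1-o(1))p_{n_j}$, which requires propagating the $O(\delta_n^r)$ density oscillation through the conditional extreme-spacing tail without losing the sharp constants that enter the exponent $2/k - 1/r$.
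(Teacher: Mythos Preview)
Your proposal attempts to prove a result that the paper does not prove at all. Theorem \ref{dev2} is stated in the appendix purely as a quotation from \cite{Deheuvels}; the paper writes ``\cite{Deheuvels} also states the following'' and then records the statement with a \qed\ symbol but no argument. The theorem is used in the paper only as a black box: Lemma \ref{shrinkCell} invokes its conclusion to show that $n^c L_{\max}\to 0$. So there is no ``paper's own proof'' against which to compare your work.

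That said, your sketch is a reasonable outline of how such limit theorems for extremal spacings are actually established in the Deheuvels literature: the upper bound does follow directly from the i.o.\ probability zero statement of Theorem \ref{dev1} by dividing through by $\log_2 n$, and the lower bound does require a second Borel--Cantelli argument with carefully decoupled events along a subsequence. The genuinely delicate point you flag --- controlling the density oscillation $O(\delta_n^r)$ through the conditional tail of the $k$th largest spacing without corrupting the constant $2/k - 1/r$ --- is indeed where the work lies, and your proposal does not carry it out; it only names the difficulty. In Deheuvels' original paper this is handled via a quantile transformation to uniform spacings together with precise asymptotics for the joint law of the top $k$ spacings, rather than the annular-increment surrogate construction you suggest. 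Your surrogate-event idea is plausible but would need a careful verification that restricting to $I_{n_j}\setminus I_{n_{j-1}}$ does not lose too many of the near-minimum-density points that generate the large spacings, which is not obvious since the extremal spacings concentrate near $x_0$ itself, not in the annulus.
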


\end{document}